\documentclass[12pt]{article}



\usepackage{graphicx}
\usepackage[caption=false,font=footnotesize]{subfig}

\usepackage{geometry}\setlength\textwidth{6in}\setlength\textheight{8.5in}
\usepackage{color}
\usepackage{amsmath}
\usepackage{amsfonts}
\usepackage{bbm}
\usepackage[nodisplayskipstretch]{setspace}

\usepackage[scaled]{helvet}
\usepackage[T1]{fontenc}

\usepackage{IEEEtrantools}
\usepackage{algorithm}
\usepackage{algorithmic}

\usepackage{ifthen}
\usepackage{etoolbox}

\usepackage{natbib}

\usepackage{pgfplots}
 \pgfplotsset{compat=newest}
 \pgfplotsset{plot coordinates/math parser=false}
 \usepgfplotslibrary{external}\tikzexternalize[prefix=tikz/]
 \pgfplotsset{try min ticks=2}
 

\graphicspath{{figures/}}

\newtheorem{theorem}{Theorem}[section]

\newtheorem{proposition}[theorem]{Proposition}

\newtheorem{model}[theorem]{Model}

\newenvironment{proof}[1][Proof]{\begin{trivlist}
\item[\hskip \labelsep {\bfseries #1}]}{\end{trivlist}}

\newcommand{\qed}{\nobreak \ifvmode \relax \else
      \ifdim\lastskip<1.5em \hskip-\lastskip
      \hskip1.5em plus0em minus0.5em \fi \nobreak
      \vrule height0.75em width0.5em depth0.25em\fi}

\newcommand{\cent}{\:;\:}

\newcommand{\real}{\mathbb{R}}

\newcommand{\lhood}{l}
\newcommand{\nconst}[1]{K_{#1}}

\newcommand{\lsspace}{\mathcal{X}}
\newcommand{\lsdim}{{d_{\ls{}}}}

\newcommand{\obspace}{\mathcal{Y}}
\newcommand{\obdim}{{d_{\ob{}}}}

\newcommand{\priorden}{p}
\newcommand{\postden}{\pi}
\newcommand{\seqden}[1]{\pi_{#1}}
\newcommand{\seqdenapprox}[1]{\hat{\pi}_{#1}}
\newcommand{\impden}{q}

\newcommand{\lhoodapprox}[1]{\hat{l}_{#1}}

\newcommand{\logprior}{M}
\newcommand{\loglhood}{L}
\newcommand{\logseqden}[1]{\Xi_{#1}}
\newcommand{\logseqdenapprox}[1]{\hat{\Xi}_{#1}}
\newcommand{\logpartden}[1]{\Upsilon_{#1}}
\newcommand{\logpw}[1]{W_{#1}}
\newcommand{\bmden}{\varsigma}

\newcommand{\lsmnapprox}[1]{\hat{m}_{#1}}
\newcommand{\lsvrapprox}[1]{\hat{P}_{#1}}

\newcommand{\flowbm}[1]{\epsilon_{#1}}          
\newcommand{\flowdrift}[1]{\zeta_{#1}}          
\newcommand{\flowdiffuse}[1]{\eta_{#1}}         
\newcommand{\flowcov}[1]{D_{#1}}                
\newcommand{\dsf}{\gamma}                
\newcommand{\flowdriftscale}[1]{A_{#1}}
\newcommand{\flowdriftconst}[1]{b_{#1}}


\newcommand{\lslin}{x^*}
\newcommand{\lgmomapprox}[1]{\hat{\lgmom}_{#1}}       
\newcommand{\obapprox}[1]{\hat{y}_{#1}}

\newcommand{\flowdriftapprox}[1]{\hat{\zeta}_{#1}}      
\newcommand{\flowdiffuseapprox}[1]{\hat{\eta}_{#1}}     



\newcommand{\half}{\frac{1}{2}}


\newcommand{\determ}[1]{\left|#1\right|}

\newcommand{\pd}[3]{\left.\frac{\partial #1}{\partial #2}\right|_{#3}}

\newcommand{\expect}[1]{\mathbb{E}_{#1}}

\newcommand{\bigo}[1]{\mathcal{O}\left(#1\right)}

\DeclareMathOperator{\trace}{Tr}

\newcommand{\rightasconverge}{\stackrel{a.s.}{\rightarrow}}


\newcommand{\ti}{n}

\newcommand{\ls}[1]{x_{#1}}
\newcommand{\ob}[1]{y_{#1}}

\newcommand{\normalden}[3]{\mathcal{N}\left(#1\left|\vphantom{#1}#2, \:#3\right.\right)}

\newcommand{\pss}[1]{^{(#1)}}

\newcommand{\den}{p}

\newcommand{\transden}{f}
\newcommand{\obsden}{g}

\newcommand{\partden}[1]{\upsilon_{#1}}
\newcommand{\pw}[1]{w_{#1}}
\newcommand{\npw}[1]{\bar{w}_{#1}}

\newcommand{\anc}[2]{a_{#1}^{(#2)}}


\newcommand{\numpart}{\mathbf{N}}


\newcommand{\obsfun}{\psi}


\newcommand{\lgmom}{H}
\newcommand{\lgmov}{R}


\newcommand{\pdv}[2]{\frac{\partial #1}{\partial #2}}
\newcommand{\ppdv}[2]{\frac{\partial^2 #1}{\partial #2^2}}
\newcommand{\mpdv}[3]{\frac{\partial^2 #1}{\partial #2 \partial #3}}
\newcommand{\npdv}[3]{\frac{\partial^{#1} #2}{\partial #3^{#1}}}

\newcommand{\pt}{\lambda}                       
\newcommand{\dpt}{\delta\lambda}                
\newcommand{\dls}{\delta x}                     
\newcommand{\dbm}[1]{\delta \epsilon_{#1}}
\newcommand{\pos}[1]{p_{#1}}           
\newcommand{\vel}[1]{v_{#1}}           
\newcommand{\bng}[1]{b_{#1}}               
\newcommand{\rng}[1]{r_{#1}}                    
\newcommand{\hei}[1]{h_{#1}}                    
\newcommand{\rngrt}[1]{s_{#1}}                  
\newcommand{\terrain}{T}                        
\newcommand{\noise}[1]{e_{#1}}


\newcommand{\xS}[1]{r_{S,#1}}
\newcommand{\xE}[1]{r_{E,#1}}
\newcommand{\xH}[1]{r_{H,#1}}
\newcommand{\aB}{\alpha_B}
\newcommand{\aS}{\alpha_S}
\newcommand{\aE}{\alpha_E}
\newcommand{\dU}{d_U}
\newcommand{\dL}{d_L}


\newcommand{\stdnorm}[1]{z_{#1}}

\newcommand{\fixed}{^*}

\newcommand{\lsmn}[1]{m_{#1}}
\newcommand{\lsvr}[1]{P_{#1}}

\newcommand{\expectloglhood}{\expect{\seqden{\pt}}\left[\loglhood\right]}

\newcommand{\errstat}[1]{e_{#1}}


\title{Approximations of the Optimal Importance Density using Gaussian Particle Flow Importance Sampling}
\author{Pete Bunch and Simon Godsill}
\date{}


\begin{document}

\maketitle

\begin{abstract}
Recently developed \emph{particle flow} algorithms provide an alternative to importance sampling for drawing particles from a posterior distribution, and a number of particle filters based on this principle have been proposed. Samples are drawn from the prior and then moved according to some dynamics over an interval of pseudo-time such that their final values are distributed according to the desired posterior. In practice, implementing a particle flow sampler requires multiple layers of approximation, with the result that the final samples do not in general have the correct posterior distribution. In this paper we consider using an \emph{approximate Gaussian flow} for sampling with a class of nonlinear Gaussian models. We use the particle flow within an importance sampler, correcting for the discrepancy between the target and actual densities with importance weights. We present a suitable numerical integration procedure for use with this flow and an accompanying step-size control algorithm. In a filtering context, we use the particle flow to sample from the optimal importance density, rather than the filtering density itself, avoiding the need to make analytical or numerical approximations of the predictive density. Simulations using particle flow importance sampling within a particle filter demonstrate significant improvement over standard approximations of the optimal importance density, and the algorithm falls within the standard sequential Monte Carlo framework.
\end{abstract}



\section{Introduction}

The particle filter is a Monte Carlo algorithm used for sequential inference of a filtering distribution associated with a state-space model. A set of weighted samples is advanced through time, drawn approximately from the filtering distribution. For a comprehensive introduction, see for example \citep{Cappe2007,Doucet2009}. The desired posterior filtering densities contain an intractable normalising constant, which is circumvented through the use of importance sampling. The principal challenge then, when designing a particle filter, is the selection of the importance density.

For nonlinear models, good choices of importance densities are frequently not obvious, particularly when informative observations of the latent state are made. In this situation, simple strategies such as sampling from the prior lead to a set of particles which are spread widely over the state space, of which a large proportion will have very low likelihood. The result is that the variance of the particle weights is high, and the resulting Monte Carlo estimates are dominated by a few particles with high weights. This phenomenon is known as \emph{weight degeneracy}. Although the optimal importance density (OID) which minimises the incremental weight variance is known, it rarely has an analytical form. In practice, Gaussian approximations of the OID based on linearisation or the unscented transform are popular choices for the importance density \citep{Doucet2000a,Merwe2000}, but these are not always effective.

One way in which weight degeneracy may be mitigated is by introducing the effect of each observation gradually, so that particles may be progressively drawn towards peaks in the likelihood. This can be achieved by using a discrete set of \emph{bridging distributions} which transition smoothly between the prior and posterior. Each one is targeted in turn using importance sampling, and the accumulation of weight variance is curtailed through the use of resampling and Markov chain Monte Carlo (MCMC) steps. Such schemes have been suggested by \citet{Neal2001,DelMoral2006} for static inference and by \citet{Godsill2001b,Gall2007,Deutscher2000,Oudjane2000} for particle filters.

It is possible to take the idea of bridging distributions to a limit and define a continuous sequence of distributions between the prior and the posterior. This idea was used by \citet{Gelman1998} for the related task of simulating normalising constants, and has been used to design sophisticated assumed-density filters \citep{Hanebeck2003a,Hanebeck2012,Hagmar2011}. More recently, particle filters have appeared which exploit the same principle, including the \emph{particle flow} methods described in series of papers including \citep{Daum2008,Daum2011d}, and the \emph{optimal transport} methods of \cite{Reich2011,Reich2012a,Reich2013}. A particle is first sampled from the prior (i.e. the transition) density, and then moved continuously according to some differential equation over an interval of \emph{pseudo-time}, such that the evolution in the density corresponds to the progressive introduction of the likelihood.

Although theoretically elegant and powerful, practical implementation of optimal transport or particle flow methods require a host of approximations to be made. First, the expressions for the optimal flow dynamics are the solution to a partial differential equation and are rarely analytically tractable. Second, when applying particle flow to sample from the filtering density, the prior is generally not known analytically, and must itself be approximated. Third, once an appropriate flow has been identified, it must usually then be integrated numerically.

In this paper we focus on models which have a Gaussian prior and likelihood, but a nonlinear relationship between observations and latent states. We move the particles according to an \emph{approximate Gaussian flow}, based on a simple linearisation around each particle state. Unlike existing particle flow algorithms, we do not treat these directly as samples from the posterior, but as proposals in an importance sampler. Thus we obtain an accompanying differential equation for the importance weights in order to correct for the discrepancies introduced by approximating the flow. (We note that \cite{Reich2013} has also recently suggested using a particle flow for importance sampling, but using completely different mechanisms to move the particles and update the weights.) The approximate Gaussian flow cannot be integrated analytically, so we introduce an efficient numerical scheme based on the analytical solution to the linear Gaussian flow, equipped with an effective step size control mechanism. Finally, we apply this particle flow proposal method to the OID of a particle filter, rather than to the filtering density itself. This allows the particle flow to be applied within the standard framework for particle filtering, and also avoids the need to use approximations of the predictive density.

We demonstrate the efficacy of Gaussian flow importance sampling for particle filtering with simulations on a number of challenging nonlinear models. Significant performance improvements are observed in error and effective sample size statistics.

In section~\ref{sec:particle_flow_importance_sampling}, we review importance sampling and particle flow methods. The main exposition on using Gaussian flows for importance sampling is contained in section~\ref{sec:gaussian_flows}. In section~\ref{sec:gaussian_flows_for_particle_filters}, this strategy is applied to particle filtering, and in section~\ref{sec:simulations}, performance is evaluated in a number of challenging simulation studies.

A brief description of a special case of our method has been previously reported in the conference proceedings of CAMSAP \citep{Bunch2013a}.

\section{Importance Sampling and Particle Flows} \label{sec:particle_flow_importance_sampling}

Consider the task of sampling from a Bayesian posterior distribution over a hidden state variable $\ls{} \in \lsspace = \real^\lsdim$,
\begin{IEEEeqnarray}{rClCrCl}
 \postden(\ls{}) & = & \frac{ \priorden(\ls{}) \lhood(\ls{}) }{ \nconst{} } & \qquad\qquad & \nconst{} & = & \int_{\lsspace} \priorden(\ls{}) \lhood(\ls{}) d\ls{} \label{eq:bayesian_posterior}      .
\end{IEEEeqnarray}
in which $\priorden$ and $\postden$ are the prior and posterior densities respectively, which are assumed to exist, $\lhood$ is the likelihood and $\nconst{}$ is a normalising constant, which typically cannot be computed.

\subsection{Importance Sampling}

Importance sampling may be used to draw from posterior distributions \eqref{eq:bayesian_posterior} \citep{Geweke1989,Liu2001a}. A set of $\numpart$ i.i.d. samples $\{\ls{}\pss{i}\}$ (or \emph{particles}, the two terms are used interchangeably throughout) is generated according to some importance distribution with density $\impden(\ls{})$ (whose support is a superset of that of $\postden(\ls{})$) and each is assigned a weight,
\begin{IEEEeqnarray}{rClCrCl}
 \pw{}\pss{i}  & = & \frac{ \priorden(\ls{}\pss{i}) \lhood(\ls{}\pss{i}) }{ \impden(\ls{}\pss{i}) } & \qquad\qquad & \npw{}\pss{i} & = & \frac{ \pw{}\pss{i} }{ \sum_j \pw{}\pss{j} }     .
\end{IEEEeqnarray}
An estimator of a posterior expectation may then be written as a finite sum over this set of weighted samples, and it is well known that this estimate is consistent, converging almost surely to its true value as the number of particles becomes large \citep{Liu2001a},
\begin{IEEEeqnarray}{rCl}
 \sum_{i=1}^{\numpart} \npw{\ti}\pss{i} \phi(\ls{}\pss{i}) & \rightasconverge & \int \postden(\ls{}) \phi(\ls{}) d\ls{}     \label{eq:consistent_estimator}       .
\end{IEEEeqnarray}

The effectiveness of such an importance sampler depends on the choice of importance density. For integration of an arbitrary test function $\phi(\ls{})$, it is desirable that $\impden(\ls{})$ be as close to $\postden(\ls{})$ as possible. Selecting a good importance density is therefore a foremost priority, but often proves challenging. One naive approach is to use the prior as the importance density $\impden(\ls{}) = \priorden(\ls{})$, meaning that $\pw{}\pss{i} = \lhood(\ls{}\pss{i})$. (In a sequential setting, this is the \emph{bootstrap filter} of \cite{Gordon1993}.) This scheme is simple and easy to implement. The only requirement is that it should be possible to sample from the prior. However, it is wasteful, especially when the variance of the prior is much greater than that of the posterior, i.e. the likelihood is highly informative about the state. In this situation, the samples are widely spread over the state space, and only a few fall in the region of high likelihood. The consequence is that many have very low weight and posterior estimates are based on only a few significant particles; the resulting estimators are poor, having a high Monte Carlo variance. This is a fundamental difficulty for importance samplers. Good posterior sampling relies on having a good approximation of the posterior to begin with!

\subsection{Particle Flow Sampling}

Particle flow and optimal transport methods are an alternative mechanism for generating posterior samples. They have been applied to Bayesian filtering and data assimilation problems by \cite{Daum2008,Daum2011d,Daum2013,Reich2011,Reich2012a}. The general principle is to begin with samples from the prior, then to move these according to some dynamics over an interval of pseudo-time such that the final values are distributed according to the posterior. One possible way to achieve this is to define the following geometric density sequence over the pseudo-time interval $\pt \in \left[0,1\right]$,
\begin{IEEEeqnarray}{rClCrCl}
 \seqden{\pt}(\ls{}) & = & \frac{ \priorden(\ls{}) \lhood(\ls{})^{\pt} }{ \nconst{\pt} } & \qquad\qquad & \nconst{\pt} & = & \int_{\lsspace} \priorden(\ls{}) \lhood(\ls{})^{\pt} d\ls{} \label{eq:density_sequence}      .
\end{IEEEeqnarray}
Since $\seqden{0} = \priorden$, initial particles may be sampled from the prior. These are then moved according to an It\={o} stochastic differential equation (SDE) such that at every instant in pseudo-time each one is distributed according to the appropriate density in the sequence \eqref{eq:density_sequence},
\begin{IEEEeqnarray}{rCl}
 d\ls{\pt} & = & \flowdrift{\pt}(\ls{\pt}) d\pt + \flowdiffuse{\pt}(\ls{\pt}) d\flowbm{\pt} \label{eq:state_sde}     ,
\end{IEEEeqnarray}
in which $\flowdrift{\pt}$ and $\flowdiffuse{\pt}$ are drift and diffusion terms, and $\flowbm{\pt}$ is Brownian motion.

At the end, since $\seqden{1} = \postden$, the final particles are independent and identically distributed according to the posterior. Hence, from the basic Monte Carlo principle, they may be used to form a consistent estimator of posterior expectations akin to \eqref{eq:consistent_estimator} but with uniform weights (i.e. $\pw{}\pss{i}=1$).

The challenge in applying such a particle flow sampler comes in finding suitable dynamics with which to move the particles such that the correct density is maintained throughout. In general, this cannot be achieved analytically, and approximations are called for (see aforesaid references). While these may sometimes lead to effective estimators, they result in the loss of consistency, and the introduction of asymptotic bias which is not easily quantified.

\subsection{Exact Particle Flows}

It may be shown that exact particle flows obey the following governing equation.
\begin{theorem} \label{theo:flow_governing_equation}
For a particle moving according to \eqref{eq:state_sde}, if the drift and diffusion are differentiable and satisfy,
\begin{IEEEeqnarray}{l}
 \loglhood(\ls{\pt}) - \expect{\seqden{\pt}}\left[ \loglhood \right] + \trace\left[ \pdv{\flowdrift{\pt}}{\ls{\pt}} \right] + \pdv{\logseqden{\pt}}{\ls{\pt}}^T \flowdrift{\pt}(\ls{\pt}) - \trace\left[ \flowcov{\pt}(\ls{\pt}) \ppdv{\logseqden{\pt}}{\ls{\pt}} \right]  \nonumber \\
 \qquad\qquad\qquad -\: \pdv{\logseqden{\pt}}{\ls{\pt}}^T \flowcov{\pt}(\ls{\pt}) \pdv{\logseqden{\pt}}{\ls{\pt}} - 2 \sum_{ij} \pdv{\flowcov{\pt,ij}}{\ls{\pt,i}} \pdv{\logseqden{\pt}}{\ls{\pt,j}} - \sum_{ij} \mpdv{\flowcov{\pt,ij}}{\ls{\pt,i}}{\ls{\pt,j}} = 0 \label{eq:optimal_flow_pde}         ,
\end{IEEEeqnarray}
in which
\begin{IEEEeqnarray}{rClCrCl}
 \logseqden{\pt}(\ls{}) & = & \log(\seqden{\pt}(\ls{})) & \qquad\qquad & \loglhood(\ls{}) & = & \log(\lhood(\ls{}))  \nonumber \\
 \flowcov{\pt}(\ls{})   & = & \half \flowdiffuse{\pt}(\ls{}) \flowdiffuse{\pt}(\ls{})^T & \qquad\qquad & \expect{\seqden{\pt}}\left[ \loglhood \right] & = & \int \seqden{\pt}(\ls{}) \loglhood(\ls{}) d\ls{} \label{eq:optimal_flow_pde_terms}      ,
\end{IEEEeqnarray}
then the marginal density of $\ls{\pt}$ is $\seqden{\pt}(\ls{})$ as defined by \eqref{eq:density_sequence}. For proof see appendix~\ref{app:governing_equation} which is based on \cite{Daum2008}.
\end{theorem}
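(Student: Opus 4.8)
The plan is to recognise \eqref{eq:optimal_flow_pde} as a disguised form of the Fokker--Planck (forward Kolmogorov) equation and to argue by uniqueness. The marginal density $\varrho_\pt$ of the solution of the SDE \eqref{eq:state_sde} evolves according to
\[
  \pdv{\varrho_\pt}{\pt} = -\sum_i \pdv{}{\ls{\pt,i}}\left[\flowdrift{\pt,i}\,\varrho_\pt\right] + \sum_{ij}\mpdv{}{\ls{\pt,i}}{\ls{\pt,j}}\left[\flowcov{\pt,ij}\,\varrho_\pt\right],
\]
with $\flowcov{\pt} = \half\flowdiffuse{\pt}\flowdiffuse{\pt}^T$ and initial condition $\varrho_0 = \priorden = \seqden{0}$, the last equality holding because $\nconst{0}=1$. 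This is a linear parabolic equation, so under mild regularity its solution is determined by its initial data. It therefore suffices to substitute the candidate $\varrho_\pt = \seqden{\pt}$ and show that the equation reduces precisely to the stated condition \eqref{eq:optimal_flow_pde}; since $\seqden{\pt}$ also matches the initial condition, uniqueness then forces the marginal density of $\ls{\pt}$ to equal $\seqden{\pt}$.

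First I would evaluate the left-hand side. Writing $\logseqden{\pt} = \log\priorden + \pt\loglhood - \log\nconst{\pt}$ and differentiating in pseudo-time gives $\pdv{\logseqden{\pt}}{\pt} = \loglhood - \frac{d}{d\pt}\log\nconst{\pt}$. Differentiating $\nconst{\pt} = \int\priorden\lhood^{\pt}\,d\ls{}$ under the integral sign and dividing by $\nconst{\pt}$ identifies $\frac{d}{d\pt}\log\nconst{\pt}$ with $\expect{\seqden{\pt}}\left[\loglhood\right]$, so that $\pdv{\seqden{\pt}}{\pt} = \seqden{\pt}\left(\loglhood - \expect{\seqden{\pt}}\left[\loglhood\right]\right)$. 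This supplies the first two terms of \eqref{eq:optimal_flow_pde}.

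The substantive work --- and the step I expect to be the main obstacle --- is expanding the spatial operator on the right-hand side and re-expressing it through $\logseqden{\pt}$ by means of the log-derivative identity $\pdv{\seqden{\pt}}{\ls{\pt,i}} = \seqden{\pt}\,\pdv{\logseqden{\pt}}{\ls{\pt,i}}$. The drift term then contributes $-\seqden{\pt}\trace\left[\pdv{\flowdrift{\pt}}{\ls{\pt}}\right] - \seqden{\pt}\,\pdv{\logseqden{\pt}}{\ls{\pt}}^T\flowdrift{\pt}$. The diffusion term requires applying the product rule twice to $\flowcov{\pt,ij}\seqden{\pt}$, producing four groups: the pure coefficient term $\seqden{\pt}\sum_{ij}\mpdv{\flowcov{\pt,ij}}{\ls{\pt,i}}{\ls{\pt,j}}$; two first-order cross terms that coincide under the symmetry $\flowcov{\pt,ij}=\flowcov{\pt,ji}$ and hence combine into $2\seqden{\pt}\sum_{ij}\pdv{\flowcov{\pt,ij}}{\ls{\pt,i}}\pdv{\logseqden{\pt}}{\ls{\pt,j}}$; and the pure second-derivative term, in which converting $\mpdv{\seqden{\pt}}{\ls{\pt,i}}{\ls{\pt,j}}$ to $\seqden{\pt}(\pdv{\logseqden{\pt}}{\ls{\pt,i}}\pdv{\logseqden{\pt}}{\ls{\pt,j}} + \mpdv{\logseqden{\pt}}{\ls{\pt,i}}{\ls{\pt,j}})$ yields the quadratic form $\seqden{\pt}\,\pdv{\logseqden{\pt}}{\ls{\pt}}^T\flowcov{\pt}\pdv{\logseqden{\pt}}{\ls{\pt}}$ together with $\seqden{\pt}\trace\left[\flowcov{\pt}\ppdv{\logseqden{\pt}}{\ls{\pt}}\right]$. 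Keeping the symmetric cross terms and the Hessian contribution correctly bookkept is where slips are most likely.

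Finally I would equate the two sides, cancel the common factor $\seqden{\pt}$ (which is strictly positive), and collect the resulting terms. The outcome is exactly \eqref{eq:optimal_flow_pde}, establishing that $\seqden{\pt}$ solves the Fokker--Planck equation if and only if the condition holds; combined with the matching initial condition and uniqueness, this shows that the marginal density of $\ls{\pt}$ is $\seqden{\pt}$, as claimed.
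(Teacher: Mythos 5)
Your proposal is correct and follows essentially the same route as the paper's proof: differentiate $\log\nconst{\pt}$ to obtain $\pdv{\logseqden{\pt}}{\pt} = \loglhood - \expect{\seqden{\pt}}\left[\loglhood\right]$, expand the Fokker--Planck operator with the product rule, convert to log-densities via $\pdv{\seqden{\pt}}{\ls{}} = \seqden{\pt}\pdv{\logseqden{\pt}}{\ls{}}$, and match terms after cancelling the positive factor $\seqden{\pt}$. The one refinement you add --- explicitly invoking uniqueness of solutions to the forward equation together with the matching initial condition $\seqden{0}=\priorden$ --- is a welcome tightening, since the paper's appendix verifies the PDE identity but leaves the implication ``governing equation $\Rightarrow$ correct marginal'' implicit.
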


The governing equation relates the SDE drift and diffusion to three quantities: the gradient $\pdv{\logseqden{\pt}}{\ls{\pt}}$ and Hessian $\ppdv{\logseqden{\pt}}{\ls{\pt}}$ of the log-density at the current location, and the expected value of the log-likelihood $\expect{\seqden{\pt}}\left[ \loglhood \right]$. Intuitively, the derivative terms may be seen as controlling the particle motion due to changes in the local shape of the sequence density, while the expectation controls motion due to shifts in the bulk of the probability mass.

\subsection{Particle Flow Importance Sampling}

Since \eqref{eq:optimal_flow_pde} cannot in general be solved, the approach adopted in this paper is to combine particle flow with importance sampling. Suppose that each one of a collection of particles moves according to an SDE \eqref{eq:state_sde}, but where $\flowdrift{\pt}$ and $\flowdiffuse{\pt}$ do not satisfy \eqref{eq:optimal_flow_pde}, and the resulting density for $\ls{\pt}$ is $\partden{\pt}(\ls{}) \ne \seqden{\pt}(\ls{})$. The ideal importance weight is then simply,
\begin{IEEEeqnarray}{rCl}
 \pw{\pt} & = & \frac{ \seqden{\pt}(\ls{\pt}) }{ \partden{\pt}(\ls{\pt}) } \label{eq:ideal_weight}      .
\end{IEEEeqnarray}
We can establish a differential equation for this weight.
\begin{theorem} \label{theo:ideal_weight}
A collection of particles moving according to \eqref{eq:state_sde} with differentiable drift and diffusion, and with log-weights $\logpw{\pt} = \log(\pw{\pt})$ evolving according to,
\begin{IEEEeqnarray}{rCl}
 d\logpw{\pt} & = & \Bigg\{ \loglhood(\ls{\pt}) - \expect{\seqden{\pt}}\left[ \loglhood \right] + \trace\left[\pdv{\flowdrift{\pt}}{\ls{\pt}}\right] + \pdv{\logseqden{\pt}}{\ls{\pt}}^T\flowdrift{\pt}(\ls{\pt}) + \trace\left[ \flowcov{\pt}(\ls{\pt}) \ppdv{\logseqden{\pt}}{\ls{\pt}} \right] \nonumber \\
 & & \qquad -\: 2 \trace\left[ \flowcov{\pt}(\ls{\pt}) \ppdv{\logpartden{\pt}}{\ls{\pt}} \right] - \pdv{\logpartden{\pt}}{\ls{\pt}}^T \flowcov{\pt}(\ls{\pt}) \pdv{\logpartden{\pt}}{\ls{\pt}} - 2 \sum_{ij} \pdv{\flowcov{\pt,ij}}{\ls{\pt,i}} \pdv{\logpartden{\pt}}{\ls{\pt,j}} \nonumber \\
 & & \qquad -\: \sum_{ij} \mpdv{\flowcov{\pt,ij}}{\ls{\pt,i}}{\ls{\pt,j}} \Bigg\} d\pt + \left[ \pdv{\logseqden{\pt}}{\ls{\pt}} - \pdv{\logpartden{\pt}}{\ls{\pt}} \right]^T \flowdiffuse{\pt}(\ls{\pt}) d\flowbm{\pt} \nonumber \\ \label{eq:ideal_weight_differential_equation} 
\end{IEEEeqnarray}
in which
\begin{IEEEeqnarray}{rClCrCl}
 \logpartden{\pt}(\ls{}) & = & \log(\partden{\pt}(\ls{})) & \qquad & \logseqden{\pt}(\ls{}) & = & \log(\seqden{\pt}(\ls{})) \nonumber     ,
\end{IEEEeqnarray}
is properly weighted with respect to $\seqden{\pt}$, and the resulting weights correspond to \eqref{eq:ideal_weight}.
For proof see appendix~\ref{app:ideal_weight}
\end{theorem}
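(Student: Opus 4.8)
The plan is to recognise that the right-hand side of \eqref{eq:ideal_weight_differential_equation} is exactly the It\^o differential of the \emph{ideal} log-weight $\logpw{\pt} = \logseqden{\pt}(\ls{\pt}) - \logpartden{\pt}(\ls{\pt})$, i.e. the logarithm of \eqref{eq:ideal_weight}. So I would differentiate this quantity along the flow \eqref{eq:state_sde}, treating $\logseqden{\pt}$ and $\logpartden{\pt}$ separately, and then collect terms to match the stated increment. Applying It\^o's formula to a function $h(\ls{\pt},\pt)$ under \eqref{eq:state_sde}, and using $\flowcov{\pt}=\half\flowdiffuse{\pt}\flowdiffuse{\pt}^T$, gives $dh = [\pdv{h}{\pt} + \pdv{h}{\ls{\pt}}^T\flowdrift{\pt} + \trace[\flowcov{\pt}\ppdv{h}{\ls{\pt}}]]\,d\pt + \pdv{h}{\ls{\pt}}^T\flowdiffuse{\pt}\,d\flowbm{\pt}$, which is the workhorse identity for both pieces.

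First I would apply this to $h=\logseqden{\pt}(\ls{\pt})$, whose explicit $\pt$-dependence comes from the geometric path \eqref{eq:density_sequence}. Writing $\logseqden{\pt} = \log\priorden + \pt\loglhood - \log\nconst{\pt}$ and differentiating the normalising integral under the sign gives $\deriv{\log\nconst{\pt}}{\pt} = \expect{\seqden{\pt}}[\loglhood]$, hence $\pdv{\logseqden{\pt}}{\pt} = \loglhood(\ls{\pt}) - \expect{\seqden{\pt}}[\loglhood]$. This already supplies the leading two terms of \eqref{eq:ideal_weight_differential_equation} together with $\pdv{\logseqden{\pt}}{\ls{\pt}}^T\flowdrift{\pt}$, $\trace[\flowcov{\pt}\ppdv{\logseqden{\pt}}{\ls{\pt}}]$ and the martingale part $\pdv{\logseqden{\pt}}{\ls{\pt}}^T\flowdiffuse{\pt}\,d\flowbm{\pt}$.

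The substance is the treatment of $\logpartden{\pt}(\ls{\pt})$, since $\partden{\pt}$ is not known in closed form but is the marginal density induced by the SDE, whose evolution obeys the Fokker--Planck equation $\pdv{\partden{\pt}}{\pt} = -\sum_i\pdv{}{\ls{\pt,i}}(\flowdrift{\pt,i}\partden{\pt}) + \sum_{ij}\mpdv{}{\ls{\pt,i}}{\ls{\pt,j}}(\flowcov{\pt,ij}\partden{\pt})$. Forming $\pdv{\logpartden{\pt}}{\pt} = \frac{1}{\partden{\pt}}\pdv{\partden{\pt}}{\pt}$ and expanding via the product rule and the identity $\frac{1}{\partden{\pt}}\mpdv{\partden{\pt}}{\ls{\pt,i}}{\ls{\pt,j}} = \mpdv{\logpartden{\pt}}{\ls{\pt,i}}{\ls{\pt,j}} + \pdv{\logpartden{\pt}}{\ls{\pt,i}}\pdv{\logpartden{\pt}}{\ls{\pt,j}}$ rewrites everything in terms of $\logpartden{\pt}$. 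When this is fed into It\^o's formula for $\logpartden{\pt}(\ls{\pt})$, the convective terms $\pm\pdv{\logpartden{\pt}}{\ls{\pt}}^T\flowdrift{\pt}$ cancel and the two $\trace[\flowcov{\pt}\ppdv{\logpartden{\pt}}{\ls{\pt}}]$ contributions reinforce into the factor $2$ seen in \eqref{eq:ideal_weight_differential_equation}. Subtracting $d\logpartden{\pt}$ from $d\logseqden{\pt}$ then reproduces \eqref{eq:ideal_weight_differential_equation} term for term, with martingale part $[\pdv{\logseqden{\pt}}{\ls{\pt}} - \pdv{\logpartden{\pt}}{\ls{\pt}}]^T\flowdiffuse{\pt}\,d\flowbm{\pt}$. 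Because the particles start from the prior, $\partden{0}=\seqden{0}=\priorden$ so the ideal log-weight vanishes at $\pt=0$; integrating the established differential from this common initial value shows any $\logpw{\pt}$ governed by \eqref{eq:ideal_weight_differential_equation} equals $\logseqden{\pt}(\ls{\pt})-\logpartden{\pt}(\ls{\pt})$, i.e. $\pw{\pt}=\seqden{\pt}(\ls{\pt})/\partden{\pt}(\ls{\pt})$. Proper weighting is then the immediate identity $\expect{\partden{\pt}}[\pw{\pt}\phi] = \expect{\seqden{\pt}}[\phi]$.

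The main obstacle I expect is the Fokker--Planck expansion: converting $\frac{1}{\partden{\pt}}\sum_{ij}\mpdv{}{\ls{\pt,i}}{\ls{\pt,j}}(\flowcov{\pt,ij}\partden{\pt})$ into log-density quantities while correctly isolating the cross terms $2\sum_{ij}\pdv{\flowcov{\pt,ij}}{\ls{\pt,i}}\pdv{\logpartden{\pt}}{\ls{\pt,j}}$ (which use the symmetry of $\flowcov{\pt}$) and the pure curvature term $\sum_{ij}\mpdv{\flowcov{\pt,ij}}{\ls{\pt,i}}{\ls{\pt,j}}$. A useful final check is that setting $\partden{\pt}=\seqden{\pt}$ collapses the drift to the left-hand side of \eqref{eq:optimal_flow_pde}, which vanishes by Theorem~\ref{theo:flow_governing_equation}, and simultaneously kills the martingale part, so the ideal weight stays constant precisely when the flow is exact.
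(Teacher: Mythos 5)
Your proposal is correct and follows essentially the same route as the paper's own proof: It\^o's lemma applied to $\logseqden{\pt}(\ls{\pt})$ (with $\pdv{\logseqden{\pt}}{\pt} = \loglhood - \expect{\seqden{\pt}}\left[\loglhood\right]$ from the geometric sequence) and to $\logpartden{\pt}(\ls{\pt})$ (with $\pdv{\logpartden{\pt}}{\pt}$ supplied by the Fokker--Planck equation in log form, producing the convective cancellation and the factor-of-two trace term), followed by subtraction $d\logpw{\pt} = d\logseqden{\pt} - d\logpartden{\pt}$. Your additional remarks --- the integration from the common initial condition $\partden{0}=\seqden{0}=\priorden$, the explicit proper-weighting identity, and the consistency check against Theorem~\ref{theo:flow_governing_equation} --- are sound and merely make explicit what the paper leaves implicit.
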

When $\logpartden{\pt} = \logseqden{\pt}$, and $\flowdrift{\pt}$ and $\flowdiffuse{\pt}$ satisfy the conditions of theorem~\ref{theo:flow_governing_equation}, then it is clear from \eqref{eq:optimal_flow_pde} that $d\logpw{\pt}=0$, as we would expect when simulating perfectly from the target distribution. Furthermore, we can omit the term $\expect{\seqden{\pt}}\left[ \loglhood \right]$ in calculation since this does not depend on $\ls{\pt}$ and thus will cancel out when the final weights are normalised. This is equivalent to using the unnormalised target density in \eqref{eq:ideal_weight}.

If it were possible to simulate particle motion according to a chosen SDE, and at the same time evaluate the corresponding weights using \eqref{eq:ideal_weight_differential_equation}, then the particles would be properly weighted importance samples for all $\pt\in[0,1]$, and standard convergence results would apply. In practice, it will be necessary to use approximate numerical integration schemes. Provided that these recover the ideal continuous-time evolution of both the particle state and weight as the step size tends to zero, then the resulting particle flow importance sampling will retain these asymptotic properties, but only in the limit as both the step sizes to go zero and the number of particles to infinity.

In practice, designing a numerical integration scheme which approximates \eqref{eq:ideal_weight_differential_equation} is not possible, because of the dependence on the unknown $\logpartden{\pt}$, apart from in the special case when $\flowdiffuse{\pt}(\ls{\pt})=0$. Instead we show that there are other valid ways in which the weight may evolve which still result in a properly weighted collection of particles. These use the concept of targeting an extended distribution over a larger set of variables for the importance sampling.

\begin{theorem}\label{theo:practical_weight}
A collection of particles moving according to \eqref{eq:state_sde} with differentiable drift and diffusion, and with log-weights $\logpw{\pt} = \log(\pw{\pt})$ evolving according to,
%
\begin{IEEEeqnarray}{rCl}
 d\logpw{\pt} & = & \Bigg\{ \loglhood(\ls{\pt}) - \expect{\seqden{\pt}}\left[ \loglhood \right] + \trace\left[\pdv{\flowdrift{\pt}}{\ls{\pt}}\right] + \pdv{\logseqden{\pt}}{\ls{\pt}}^T\flowdrift{\pt}(\ls{\pt}) + \trace\left[ \flowcov{\pt}(\ls{\pt}) \npdv{2}{\logseqden{\pt}}{\ls{\pt}} \right] \nonumber \\
 & & \qquad\qquad -\: \half \sum_{ijk} \left[ \pdv{\flowdiffuse{\pt,ik}}{\ls{\pt,j}} \pdv{\flowdiffuse{\pt,jk}}{\ls{\pt,i}} \right] \Bigg\} d\pt + \sum_{ij} \pdv{\flowdiffuse{\pt,ij}}{\ls{\pt,i}} d\flowbm{\pt,j} + \pdv{\logseqden{\pt}}{\ls{\pt}}^T \flowdiffuse{\pt} d\flowbm{\pt} \label{eq:practical_weight_differential_equation} 
\end{IEEEeqnarray}
is properly weighted with respect to $\seqden{\pt}$. The proof uses the construction of an extended target distribution which encompasses the path of the Brownian motion. See appendix~\ref{app:practical_weight}.
\end{theorem}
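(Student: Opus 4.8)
The plan is to prove proper weighting not for the marginal law of $\ls{\pt}$ directly — which would force the intractable $\partden{\pt}$ into the weight, exactly as in \eqref{eq:ideal_weight_differential_equation} — but for the joint law of the whole particle trajectory $\ls{0:\pt}$, and then to marginalise. First I would introduce the path-space proposal density $\impden(\ls{0:\pt})$ induced by drawing $\ls{0}\sim\priorden=\seqden{0}$ and evolving \eqref{eq:state_sde}, together with an \emph{extended target} $\augfiltden{\pt}(\ls{0:\pt})$ on the same path space, constructed so that its terminal marginal is exactly $\seqden{\pt}$. The key observation is that if $\pw{\pt}=\augfiltden{\pt}(\ls{0:\pt})/\impden(\ls{0:\pt})$, then for any test function $\phi$ depending only on the endpoint, $\expect{\impden}[\pw{\pt}\phi(\ls{\pt})]=\int\augfiltden{\pt}(\ls{0:\pt})\phi(\ls{\pt})\,d\ls{0:\pt}=\int\seqden{\pt}(\ls{\pt})\phi(\ls{\pt})\,d\ls{\pt}$, so proper weighting on the path space immediately forces proper weighting of the endpoint marginal. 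The whole problem thus reduces to (a) building $\augfiltden{\pt}$ with the correct marginal and a tractable weight, and (b) identifying the continuous-time evolution of $\logpw{\pt}=\log(\augfiltden{\pt}/\impden)$ with \eqref{eq:practical_weight_differential_equation}.

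For the construction I would discretise $[0,\pt]$ into steps, write the proposal as a product of short-time transition kernels $\incimpden{}(\ls{\pt+d\pt}\mid\ls{\pt})$ — Gaussian with mean $\ls{\pt}+\flowdrift{\pt}\,d\pt$ and covariance $2\flowcov{\pt}(\ls{\pt})\,d\pt$ to leading order — and take the extended target to be $\seqden{\pt}(\ls{\pt})$ times a product of backward kernels $L$. The essential design choice is the backward kernel: its defining requirement is that the incremental log-ratio $\log L-\log\incimpden{}$ depend only on the known flow coefficients $\flowdrift{\pt},\flowdiffuse{\pt}$ and never on $\partden{\pt}$. A natural candidate is the reverse-time kernel sharing the proposal's diffusion coefficient $2\flowcov{\pt}$, which cancels the intractable marginal and leaves exactly the Jacobian-type corrections appearing in \eqref{eq:practical_weight_differential_equation}. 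With this choice the incremental log-weight telescopes to $\log\seqden{\pt}(\ls{\pt})-\log\seqden{0}(\ls{0})+\sum(\log L-\log\incimpden{})$, and at $\pt=0$ we have $\pw{0}=1$ with $\ls{0}\sim\seqden{0}=\priorden$, which supplies the correct initial condition.

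The analytic heart is then the continuous-time limit. The change in $\log\seqden{\pt}(\ls{\pt})$ over a step splits, by It\^o's lemma, into an explicit pseudo-time derivative $\pdv{\logseqden{\pt}}{\pt}\,d\pt=(\loglhood(\ls{\pt})-\expect{\seqden{\pt}}[\loglhood])\,d\pt$ (using $\logseqden{\pt}=\log\priorden+\pt\loglhood-\log\nconst{\pt}$ and $\pdv{}{\pt}\log\nconst{\pt}=\expect{\seqden{\pt}}[\loglhood]$), plus the transport terms $\pdv{\logseqden{\pt}}{\ls{\pt}}^T(\flowdrift{\pt}\,d\pt+\flowdiffuse{\pt}\,d\flowbm{\pt})+\trace[\flowcov{\pt}\npdv{2}{\logseqden{\pt}}{\ls{\pt}}]\,d\pt$, where the quadratic variation $d\ls{\pt}d\ls{\pt}^T\to 2\flowcov{\pt}\,d\pt$ supplies the Hessian trace. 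These reproduce every $\seqden{\pt}$-term of \eqref{eq:practical_weight_differential_equation}. What remains — the drift contributions $\trace[\pdv{\flowdrift{\pt}}{\ls{\pt}}]$ and $-\half\sum_{ijk}\pdv{\flowdiffuse{\pt,ik}}{\ls{\pt,j}}\pdv{\flowdiffuse{\pt,jk}}{\ls{\pt,i}}$, and the stochastic term $\sum_{ij}\pdv{\flowdiffuse{\pt,ij}}{\ls{\pt,i}}\,d\flowbm{\pt,j}$ — must come entirely out of $\log L-\log\incimpden{}$, the log-ratio of two Gaussians whose means differ by $O(d\pt)$ and whose covariances are the state-dependent $2\flowcov{\pt}$ evaluated at the two ends of the step.

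The main obstacle is precisely this last expansion. Because $\flowdiffuse{\pt}(\ls{\pt})$ is state-dependent, the forward and backward covariances are evaluated at points separated by $\flowdrift{\pt}\,d\pt+\flowdiffuse{\pt}\,d\flowbm{\pt}$, so the ratio of determinants contributes the divergence $\trace[\pdv{\flowdrift{\pt}}{\ls{\pt}}]$ and the diffusion-gradient terms, while the ratio of quadratic forms contributes the $d\flowbm{\pt}$-coupling. One must retain terms to $O(d\pt)$, using $d\flowbm{\pt,i}\,d\flowbm{\pt,j}=\delta_{ij}\,d\pt$, and verify that the awkward $O(\sqrt{d\pt})$ pieces combine into the stated It\^o correction $-\half\sum_{ijk}\pdv{\flowdiffuse{\pt,ik}}{\ls{\pt,j}}\pdv{\flowdiffuse{\pt,jk}}{\ls{\pt,i}}$ rather than diverging. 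Keeping this bookkeeping consistent — and confirming that all $\logpartden{\pt}$-dependence has genuinely cancelled through the backward kernel — is where the real care is needed; once the increment is shown to equal \eqref{eq:practical_weight_differential_equation}, summing over the partition and passing to the limit completes the proof.
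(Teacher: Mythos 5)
Your overall architecture (extended target, telescoping discrete weights, continuous-time limit) is the right family of argument, and your handling of the target-density part via It\^o's lemma reproduces the paper's computation exactly. The gap sits at precisely the step you defer: the assertion that a backward kernel chosen as ``the reverse-time kernel sharing the proposal's diffusion coefficient'' leaves exactly the corrections appearing in \eqref{eq:practical_weight_differential_equation}. Keep in mind that \emph{every} Markov backward kernel $L$ yields properly weighted samples on the extended space, so the entire content of the theorem is \emph{which} weight evolution comes out; this is the part you do not compute, and under its natural reading, $L(\ls{\pt}\,|\,\ls{\pt+\dpt})=\normalden{\ls{\pt}}{\ls{\pt+\dpt}-\flowdrift{\pt}(\ls{\pt+\dpt})\dpt}{2\flowcov{\pt}(\ls{\pt+\dpt})\dpt}$, the claim is false whenever $\flowdiffuse{\pt}$ is state-dependent. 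Scalar counterexample with $\flowdrift{}=0$, $\flowdiffuse{}=\eta(x)$: the step is $\delta=\eta(\ls{\pt})\dbm{}$ with $\dbm{}\sim\normalden{\cdot}{0}{\dpt}$, and expanding $\log L-\log\incimpden{}$ gives $-\eta'\dbm{}$ from the ratio of normalising determinants plus $\tfrac{\eta'}{\dpt}\dbm{}^{3}$ from the quadratic forms, up to $\bigo{\dpt}$ drift terms. Writing $\dbm{}^{3}=3\dpt\,\dbm{}+(\dbm{}^{3}-3\dpt\,\dbm{})$, the Brownian coefficient in the limit is $2\eta'$, not the $\eta'$ required by the term $\sum_{ij}\pdv{\flowdiffuse{\pt,ij}}{\ls{\pt,i}}d\flowbm{\pt,j}$; worse, the centred cubics are martingale increments with conditional variance $6(\eta')^{2}\dpt$, so over the partition they contribute an additional Gaussian noise of variance $6\int(\eta')^{2}d\pt$ that survives the limit and is not representable as an integral against $d\flowbm{\pt}$ at all. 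Completed honestly, your construction therefore proves proper weighting of a \emph{different}, strictly noisier weight process, not of the stated one.

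The paper's construction differs in a way that is essential here: its extended target is $\seqden{\ti}(\ls{\ti})\prod_k\bmden(\stdnorm{k})$, i.e.\ the auxiliary variables are the \emph{noise increments} themselves (this is what ``encompasses the path of the Brownian motion'' means), not the intermediate states. Conditional on the noise path each discretised step is an invertible deterministic map, the proposal density follows by change of variables, and the weight becomes $\seqden{\ti}(\ls{\ti})\,/\,\bigl(\priorden(\ls{0})\prod_k\determ{\pdv{\ls{k-1}}{\ls{k}}}\bigr)$; the terms $\sum_{ij}\pdv{\flowdiffuse{\pt,ij}}{\ls{\pt,i}}d\flowbm{\pt,j}$ and $-\half\sum_{ijk}\pdv{\flowdiffuse{\pt,ik}}{\ls{\pt,j}}\pdv{\flowdiffuse{\pt,jk}}{\ls{\pt,i}}\,d\pt$ arise exactly from expanding $-\log\determ{\pdv{\ls{\ti-1}}{\ls{\ti}}}$, a Jacobian evaluated with the noise held fixed. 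A state-space backward kernel integrates the noise out at every step and cannot reproduce these conditional-Jacobian terms. The two routes coincide only when $\flowdiffuse{}$ is constant or zero --- but the theorem is needed for state-dependent diffusions, since in the approximate Gaussian flow $\flowdiffuseapprox{\pt}$ depends on the state through $\lsvrapprox{\pt}$. To repair your proof you would have to take the backward kernel to be the push-forward of fresh noise through the inverse of the forward map, which is precisely the paper's construction.
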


Notice that \eqref{eq:practical_weight_differential_equation} coincides with the ideal form \eqref{eq:ideal_weight_differential_equation} when $\flowdiffuse{\pt}(\ls{\pt})=0$. It is possible to construct a suitable numerical scheme which results in weight evolution according to \eqref{eq:practical_weight_differential_equation} as the step size tends to $0$.

The idea of combining particle and importance sampling is somewhat in the spirit of \citep{Reich2013}, but the construction used here is substantially different, both in the type of particle flow employed and in the calculation of the weights.

\section{Sampling with Gaussian Flows} \label{sec:gaussian_flows}

\subsection{Exact Gaussian Flows for Linear Gaussian Models}


When the model used is linear and Gaussian, the exact flow for particle motion may be derived analytically. Suppose the likelihood takes the form of an observation $\ob{}\in\obspace = \real^{\obdim}$ which is linearly dependent on the state with Gaussian noise, and that the prior is also Gaussian, as follows.
\begin{model} \label{mod:linear_gaussian}
\begin{IEEEeqnarray}{rClCrCl}
 \priorden(\ls{}) & = & \normalden{\ls{}}{\lsmn{0}}{\lsvr{0}} & \qquad\qquad & \lhood(\ls{}) & = & \normalden{\ob{}}{\lgmom\ls{}}{\lgmov}
\end{IEEEeqnarray}
$\lsvr{0}$ and $\lgmov$ are positive definite covariance matrices.
\end{model}
The following properties may be established.
\begin{proposition} \label{prop:linear_gaussian_density_sequence}
For model~\ref{mod:linear_gaussian}, the geometric density sequence \eqref{eq:density_sequence} is,
\begin{IEEEeqnarray}{rCl}
 \seqden{\pt}(\ls{}) & = & \normalden{\ls{}}{\lsmn{\pt}}{\lsvr{\pt}} \label{eq:linear_gaussian_density_sequence}
\end{IEEEeqnarray}
\begin{IEEEeqnarray}{rClCrCl}
 \lsvr{\pt} & = & \left(\lsvr{0}^{-1} + \pt \lgmom^T \lgmov^{-1} \lgmom\right)^{-1} & \qquad \lsmn{\pt} & = & \lsvr{\pt} \left[ \lsvr{0}^{-1} \lsmn{0} + \pt \lgmom^T \lgmov^{-1} \ob{} \right] \label{eq:gaussian_mean_variance}      .
\end{IEEEeqnarray}
\end{proposition}

\begin{proof}
The proof is straightforward using standard identities for Gaussian densities.\qed
\end{proof}

\begin{proposition} \label{theo:optimal_gaussian_flow}
A particle sampled from the prior of model~\ref{mod:linear_gaussian} and moved according to SDE \eqref{eq:state_sde} over the interval $[0,1]$ follows the density $\seqden{\pt}$ defined in \eqref{eq:linear_gaussian_density_sequence} when the drift and diffusion terms are set to,
\begin{IEEEeqnarray}{rCl}
 \flowdrift{\pt}(\ls{\pt}) & = & \lsvr{\pt} \lgmom^T \lgmov^{-1} \left( \left(\ob{} - \lgmom \lsmn{\pt} \right) - \half \lgmom (\ls{\pt}-\lsmn{\pt}) \right) - \half \dsf (\ls{\pt}-\lsmn{\pt}) \nonumber \\
 \flowdiffuse{\pt}         & = & \dsf^{\half} \lsvr{\pt}^{\half} \label{eq:gaussian_flow_drift_diffusion}      ,
\end{IEEEeqnarray}
where $\dsf \ge 0$ is a design parameter of the flow.
\end{proposition}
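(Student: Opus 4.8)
The plan is to invoke Theorem~\ref{theo:flow_governing_equation}. Since the particle is initialised from the prior, which coincides with $\seqden{0}$, it suffices to check that the proposed drift and diffusion \eqref{eq:gaussian_flow_drift_diffusion} satisfy the governing equation \eqref{eq:optimal_flow_pde} for every $\pt\in[0,1]$, with $\seqden{\pt}$ the Gaussian of Proposition~\ref{prop:linear_gaussian_density_sequence}; the marginal then equals $\seqden{\pt}$ throughout. The whole argument is a substitution-and-cancellation exercise, and I would organise it around the free parameter $\dsf$, whose coefficient must vanish on its own.

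First I would record the ingredients that \eqref{eq:optimal_flow_pde} needs. Because $\seqden{\pt}$ is Gaussian with mean $\lsmn{\pt}$ and covariance $\lsvr{\pt}$, its log-gradient and Hessian are $\pdv{\logseqden{\pt}}{\ls{\pt}} = -\lsvr{\pt}\minv(\ls{\pt}-\lsmn{\pt})$ and $\ppdv{\logseqden{\pt}}{\ls{\pt}} = -\lsvr{\pt}\minv$. The diffusion gives $\flowcov{\pt} = \half\flowdiffuse{\pt}\flowdiffuse{\pt}^T = \half\dsf\lsvr{\pt}$, which is crucially \emph{independent of} $\ls{\pt}$; consequently the two terms in \eqref{eq:optimal_flow_pde} built from spatial derivatives of $\flowcov{\pt}$ vanish identically. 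Differentiating the drift \eqref{eq:gaussian_flow_drift_diffusion} yields the constant Jacobian $\pdv{\flowdrift{\pt}}{\ls{\pt}} = -\half\lsvr{\pt}\lgmom^T\lgmov\minv\lgmom - \half\dsf\idmatrix$. Writing $\xi$ for $\ls{\pt}-\lsmn{\pt}$ and substituting, the terms proportional to $\dsf$ cancel in two pairs: the $-\half\dsf\lsdim$ from $\trace[\pdv{\flowdrift{\pt}}{\ls{\pt}}]$ annihilates the $-\trace[\flowcov{\pt}\ppdv{\logseqden{\pt}}{\ls{\pt}}] = +\half\dsf\lsdim$ contribution, and the quadratic $+\half\dsf\,\xi^T\lsvr{\pt}\minv\xi$ coming from $\pdv{\logseqden{\pt}}{\ls{\pt}}^T\flowdrift{\pt}$ cancels $-\pdv{\logseqden{\pt}}{\ls{\pt}}^T\flowcov{\pt}\pdv{\logseqden{\pt}}{\ls{\pt}}$.

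For the $\dsf$-free remainder, I would expand both $\loglhood(\ls{\pt})$ and the $\dsf$-free part of $\pdv{\logseqden{\pt}}{\ls{\pt}}^T\flowdrift{\pt}$ about $\lsmn{\pt}$. The pieces linear in $\xi$ are mutual transposes and cancel, as do the two $\xi^T\lgmom^T\lgmov\minv\lgmom\,\xi$ quadratic pieces, leaving the constant $-\half(\ob{}-\lgmom\lsmn{\pt})^T\lgmov\minv(\ob{}-\lgmom\lsmn{\pt})$ alongside $-\half\trace[\lsvr{\pt}\lgmom^T\lgmov\minv\lgmom]$. The last ingredient is $\expect{\seqden{\pt}}\left[\loglhood\right]$: using the standard identity for the mean of a quadratic form under a Gaussian, this equals $-\half(\ob{}-\lgmom\lsmn{\pt})^T\lgmov\minv(\ob{}-\lgmom\lsmn{\pt}) - \half\trace[\lgmov\minv\lgmom\lsvr{\pt}\lgmom^T]$ up to the likelihood normalising constant, which cancels the one hidden inside $\loglhood(\ls{\pt})$. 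Subtracting then produces exactly the matching quadratic constant and a trace term that agrees with the surviving drift-Jacobian trace by the cyclic property $\trace[\lgmov\minv\lgmom\lsvr{\pt}\lgmom^T] = \trace[\lsvr{\pt}\lgmom^T\lgmov\minv\lgmom]$, so everything cancels and \eqref{eq:optimal_flow_pde} holds.

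I expect the main obstacle to be the expectation term and its bookkeeping: one must carry the quadratic-form-under-Gaussian identity accurately and recognise that its trace contribution is precisely what kills the leftover $\trace[\pdv{\flowdrift{\pt}}{\ls{\pt}}]$ via cyclic invariance. The remark that the $\dsf$ terms must cancel independently—because $\dsf\ge 0$ is an arbitrary design parameter—gives a clean consistency check and splits the verification into two self-contained, less error-prone computations.
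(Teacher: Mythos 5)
Your proposal is correct and follows exactly the paper's route: the paper's proof is precisely a substitution of \eqref{eq:gaussian_flow_drift_diffusion} into the governing equation \eqref{eq:optimal_flow_pde} of Theorem~\ref{theo:flow_governing_equation}, declared ``immediately clear.'' You have simply carried out in full the cancellation bookkeeping (the vanishing $\flowcov{\pt}$-derivative terms, the pairwise $\dsf$ cancellations, and the Gaussian quadratic-form expectation matching the drift-Jacobian trace) that the paper leaves implicit, and all of it checks out.
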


\begin{proof}
Substituting in $\flowdrift{\pt}$ and $\flowdiffuse{\pt}$ from \eqref{eq:gaussian_flow_drift_diffusion}, it is immediately clear that the governing equation \eqref{eq:optimal_flow_pde} is satisfied, and hence that the flow is exact.\qed
\end{proof}

The behaviour of the state dynamics is controlled through the choice of $\dsf$. When $\dsf=0$, the particle motion is deterministic; when $\dsf>0$, stochastic.

\begin{theorem} \label{theo:integrated_gaussian_flow}
For model~\ref{mod:linear_gaussian}, a particle $\ls{\pt_0}\sim\seqden{\pt_0}$ with this density defined in \eqref{eq:linear_gaussian_density_sequence} and moved according to SDE \eqref{eq:state_sde} over the interval $[\pt_0,\pt_1]$  with drift and diffusion terms as in \eqref{eq:gaussian_flow_drift_diffusion} reaches the state,
\begin{IEEEeqnarray}{rCl}
 \ls{\pt_1} & = & \lsmn{\pt_1} + \exp\left\{-\half\dsf(\pt_1-\pt_0)\right\} \left(\lsvr{\pt_1}\lsvr{\pt_0}^{-1}\right)^{\half}(\ls{\pt_0}-\lsmn{\pt_0}) \nonumber \\
 & & \qquad \qquad \qquad \qquad \qquad + \: \left[ \frac{1-\exp\left\{-\dsf(\pt_1-\pt_0)\right\}}{\pt_1-\pt_0} \right]^{\half} \lsvr{\pt_1}^{\half} \left(\flowbm{\pt_1}-\flowbm{\pt_0}\right) \label{eq:state_update}     ,
\end{IEEEeqnarray}
where $A^{\half}$ is the principal matrix square root of $A$.
\end{theorem}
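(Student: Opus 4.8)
The plan is to reduce the SDE to a linear, time-inhomogeneous equation for the centred state $\dls = \ls{\pt} - \lsmn{\pt}$ and then solve that equation explicitly with a state-transition matrix built from $\lsvr{\pt}$. First I would differentiate the expressions in Proposition~\ref{prop:linear_gaussian_density_sequence}. Writing $\lsvr{\pt}^{-1} = \lsvr{0}^{-1} + \pt\,\lgmom^T\lgmov^{-1}\lgmom$ gives $\deriv{\lsvr{\pt}}{\pt} = -\lsvr{\pt}\lgmom^T\lgmov^{-1}\lgmom\,\lsvr{\pt}$, and differentiating $\lsvr{\pt}^{-1}\lsmn{\pt}$ gives $\deriv{\lsmn{\pt}}{\pt} = \lsvr{\pt}\lgmom^T\lgmov^{-1}(\ob{}-\lgmom\lsmn{\pt})$. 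Substituting into the drift \eqref{eq:gaussian_flow_drift_diffusion} shows that its leading term is exactly $\deriv{\lsmn{\pt}}{\pt}$, while the remaining terms combine, via $\deriv{\lsvr{\pt}}{\pt}\,\lsvr{\pt}^{-1} = -\lsvr{\pt}\lgmom^T\lgmov^{-1}\lgmom$, into $(\half\deriv{\lsvr{\pt}}{\pt}\,\lsvr{\pt}^{-1} - \half\dsf\idmatrix)\dls$. Hence, by It\=o's rule applied to $\dls$,
\[ d\dls = \left( \half \deriv{\lsvr{\pt}}{\pt}\,\lsvr{\pt}^{-1} - \half\dsf\idmatrix \right)\dls\, d\pt + \dsf^{\half}\lsvr{\pt}^{\half}\, d\flowbm{\pt}. \]

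Because $-\half\dsf\idmatrix$ commutes with everything, the state-transition matrix factorises as $\Psi(\pt_1,\pt_0) = \exp\{-\half\dsf(\pt_1-\pt_0)\}\,\Phi(\pt_1,\pt_0)$, where $\Phi$ solves $\deriv{\Phi}{\pt_1} = \half\deriv{\lsvr{\pt_1}}{\pt_1}\,\lsvr{\pt_1}^{-1}\Phi$ with $\Phi(\pt_0,\pt_0)=\idmatrix$. I would then claim $\Phi(\pt_1,\pt_0) = (\lsvr{\pt_1}\lsvr{\pt_0}^{-1})^{\half}$. The initial condition is immediate, and to verify the ODE I would differentiate the identity $\Phi^2 = \lsvr{\pt_1}\lsvr{\pt_0}^{-1}$ and test the candidate derivative $\half\deriv{\lsvr{\pt_1}}{\pt_1}\,\lsvr{\pt_1}^{-1}\Phi$ against the resulting relation.

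The main obstacle is precisely this last verification: since $\lsvr{\pt_1}$ and $\lsvr{\pt_0}$ need not commute, differentiating the principal square root requires care, and the candidate derivative only satisfies $\dot\Phi\Phi + \Phi\dot\Phi = \deriv{}{\pt_1}(\lsvr{\pt_1}\lsvr{\pt_0}^{-1})$ if $\Phi$ commutes with $\deriv{\lsvr{\pt_1}}{\pt_1}\,\lsvr{\pt_1}^{-1} = -\lsvr{\pt_1}\lgmom^T\lgmov^{-1}\lgmom$. The resolution rests on the special structure $\lsvr{\pt_1}\lsvr{\pt_0}^{-1} = \idmatrix - (\pt_1-\pt_0)\,\lsvr{\pt_1}\lgmom^T\lgmov^{-1}\lgmom$, which follows from $\lsvr{\pt_0}^{-1} = \lsvr{\pt_1}^{-1} - (\pt_1-\pt_0)\lgmom^T\lgmov^{-1}\lgmom$. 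Thus $\Phi^2$ is affine in $\lsvr{\pt_1}\lgmom^T\lgmov^{-1}\lgmom$, and because the principal square root of a matrix with positive spectrum is a limit of polynomials in that matrix, $\Phi$ commutes with $\lsvr{\pt_1}\lgmom^T\lgmov^{-1}\lgmom$. With this commutativity the candidate reproduces the right-hand side exactly, and uniqueness of the solution of the associated Sylvester equation confirms $\Phi(\pt_1,\pt_0) = (\lsvr{\pt_1}\lsvr{\pt_0}^{-1})^{\half}$, giving the deterministic part of \eqref{eq:state_update} once $\lsmn{\pt_1}$ is restored.

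Finally I would treat the noise. The stochastic integral $\int_{\pt_0}^{\pt_1}\Psi(\pt_1,s)\flowdiffuse{s}\,d\flowbm{s}$ is a zero-mean Gaussian, and I would compute its covariance using the companion identity $\Phi(\pt_1,s)\lsvr{s}\Phi(\pt_1,s)^T = \lsvr{\pt_1}$. This follows because principal square roots respect the similarity $(\lsvr{\pt_1}\lsvr{s}^{-1})^T = \lsvr{s}^{-1}(\lsvr{\pt_1}\lsvr{s}^{-1})\lsvr{s}$, so $(\lsvr{\pt_1}\lsvr{s}^{-1})^{T\,\half} = \lsvr{s}^{-1}(\lsvr{\pt_1}\lsvr{s}^{-1})^{\half}\lsvr{s}$. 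Since $\flowdiffuse{s}\flowdiffuse{s}^T = \dsf\lsvr{s}$, the covariance becomes $\dsf\lsvr{\pt_1}\int_{\pt_0}^{\pt_1}e^{-\dsf(\pt_1-s)}\,ds = (1-e^{-\dsf(\pt_1-\pt_0)})\lsvr{\pt_1}$. The Gaussian term in \eqref{eq:state_update} has exactly this covariance, because $\flowbm{\pt_1}-\flowbm{\pt_0}$ has covariance $(\pt_1-\pt_0)\idmatrix$ and the scalar prefactor supplies the remaining normalisation. As both the true stochastic integral and the proposed increment are zero-mean Gaussians with identical covariance, they agree in law, which establishes the transition kernel and hence the stated result (with equality in distribution for the stochastic contribution, as is all that is needed for importance sampling).
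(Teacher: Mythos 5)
Your proof is correct, but it takes a genuinely different route from the paper. The paper's appendix proof is a \emph{verification} argument: it writes the claimed update \eqref{eq:state_update} over an infinitesimal step from $\pt$ to $\pt+\dpt$, Taylor-expands $\lsmn{\pt+\dpt}$, $\lsvr{\pt+\dpt}$, $\left(\lsvr{\pt+\dpt}\lsvr{\pt}^{-1}\right)^{\half}$ and the scalar exponential factors, and checks that the resulting increment $\ls{\pt+\dpt}-\ls{\pt}$ reproduces the drift and diffusion of \eqref{eq:state_sde} up to $\bigo{\dpt^{3/2}}$, so that the formula is consistent with the SDE in the small-step limit. You instead \emph{construct} the solution: you centre the state at $\lsmn{\pt}$, observe that the drift in \eqref{eq:gaussian_flow_drift_diffusion} decomposes into $\deriv{\lsmn{\pt}}{\pt}$ plus a term linear in $\ls{\pt}-\lsmn{\pt}$, and solve the resulting linear inhomogeneous SDE with the state-transition matrix $\exp\left\{-\half\dsf(\pt_1-\pt_0)\right\}\left(\lsvr{\pt_1}\lsvr{\pt_0}^{-1}\right)^{\half}$. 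This is exactly the "matrix integrating factor approach" that the paper declares possible but "rather lengthy" and declines to present, and your two key observations are what make it compact: the affine identity $\lsvr{\pt_1}\lsvr{\pt_0}^{-1}=I-(\pt_1-\pt_0)\lsvr{\pt_1}\lgmom^T\lgmov^{-1}\lgmom$, which yields the commutativity needed to differentiate the principal square root, and the similarity identity giving $\Phi(\pt_1,s)\lsvr{s}\Phi(\pt_1,s)^T=\lsvr{\pt_1}$, which collapses the noise covariance integral. What each approach buys: the paper's verification is shorter and uses only elementary expansions, but it presupposes the answer and (like the theorem statement itself) glosses over the fact that the true stochastic contribution is $\int_{\pt_0}^{\pt_1}\Psi(\pt_1,s)\flowdiffuse{s}\,d\flowbm{s}$ rather than a pathwise function of $\flowbm{\pt_1}-\flowbm{\pt_0}$; your derivation produces the formula rather than checking it, handles the finite interval directly rather than through a limit of small steps, and makes explicit that the Brownian-increment term in \eqref{eq:state_update} represents the stochastic integral only in distribution --- which, as you correctly note, is all that the importance-sampling application requires.
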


\begin{proof}
A constructive proof is possible by solving the SDE. This may be accomplished using a matrix integrating factor approach, and is rather lengthy. Having obtained the solution, it is straightforward to verify that it satisfies the SDE. See appendix \ref{app:integrated_gaussian_flow}.\qed
\end{proof}

Using equation~\eqref{eq:state_update}, it is possible to calculate or sample the state at any point in pseudo-time given the state at some earlier point in pseudo-time. An example is shown in figure~\ref{fig:gaussian_flow_example}.

\begin{figure}[bt]
\centering
\subfloat[]{ \includegraphics{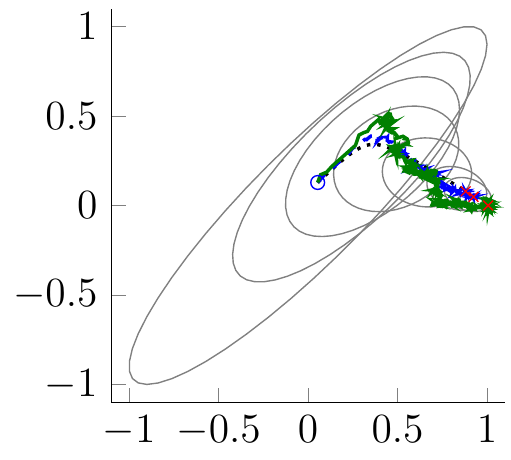} }
\subfloat[]{ \includegraphics{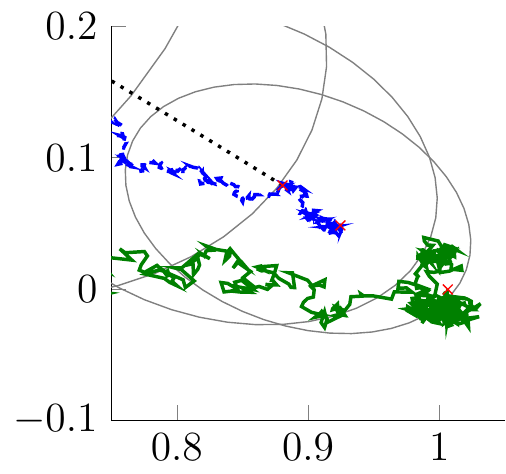} }
\captionsetup{singlelinecheck=off}
\caption[.]{An illustration of a Gaussian flow for a linear Gaussian model. The ellipses are one standard deviation contours of a selection of the sequence densities. The paths show the evolution of three particles from the same starting state using $\dsf=0$ (dotted), $\dsf=0.03$ (dashed) and $\dsf=0.3$ (solid). The initial, prior-sampled state is shown with a circle. The second panel shows a detailed view of the final stages of the trajectories. Model parameters are $\lsmn{0}=\begin{bmatrix}0 & 0\end{bmatrix}^T$, $\lsvr{0}=\begin{bmatrix}1 & 0.9 \\ 0.9 & 1\end{bmatrix}$, $\ob{}=\begin{bmatrix}1 & 0\end{bmatrix}^T$, $\lgmom=I$, $\lgmov=\begin{bmatrix}0.02 & 0.005 \\ 0.005 & 0.01\end{bmatrix}$.}
\label{fig:gaussian_flow_example}
\end{figure}

\subsection{Approximate Gaussian Flows for Nonlinear Gaussian Models} \label{sec:nonlinear_gaussian_models}

For the linear Gaussian models of the previous section, sampling using a particle flow is clearly of no practical use, since the posterior distribution may be computed and sampled directly. However, it may be used as the basis of an approximately optimal flow for less tractable models. Consider the class of models with Gaussian densities but with a nonlinear dependence of the observation on the state. (N.B. In a filtering setting this encompasses the common case where both the transition and observation functions are nonlinear with additive Gaussian noise.)
\begin{model} \label{mod:nonlinear_gaussian}
\begin{IEEEeqnarray}{rClCrCl}
 \priorden(\ls{}) & = & \normalden{\ls{}}{\lsmn{0}}{\lsvr{0}} & \qquad\qquad & \lhood(\ls{}) & = & \normalden{\ob{}}{\obsfun(\ls{})}{\lgmov}
\end{IEEEeqnarray}
The observation function $\obsfun$ is twice differentiable with respect to $\ls{}$.
\end{model}

For nonlinear Gaussian models, the density sequence is not available analytically, nor is there a closed form expression for the particle flow. However, we can initialise the flow exactly with a sample from the Gaussian prior, and then approximate the optimal dynamics using the Gaussian flow defined in proposition~\ref{theo:optimal_gaussian_flow}. The key to this approximation is to linearise the likelihood using a truncated Taylor expansion,
\begin{IEEEeqnarray}{rClCl}
 \lhood(\ls{}) & \approx & \lhoodapprox{}(\ls{} \cent \lslin) & = & \normalden{\obapprox{}(\lslin)}{\lgmomapprox{}(\lslin)\ls{}}{\lgmov} \nonumber 
\end{IEEEeqnarray}
\begin{IEEEeqnarray}{rClCrCl}
 \lgmomapprox{}(\lslin) & = & \pd{\obsfun}{\ls{}}{\lslin} & \qquad & \obapprox{}(\lslin) & = & \ob{} - \obsfun(\lslin) + \lgmomapprox{}(\lslin) \lslin \label{eq:linearisation}      .
\end{IEEEeqnarray}
Using the linearised model, we can write the approximate Gaussian moments,
\begin{IEEEeqnarray}{rCl}
 \lsvrapprox{\pt}(\lslin) & = & \left(\lsvr{0}^{-1} + \pt \lgmomapprox{}(\lslin)^T \lgmov^{-1} \lgmomapprox{}(\lslin)\right)^{-1} \nonumber \\
 \lsmnapprox{\pt}(\lslin) & = & \lsvrapprox{\pt}(\lslin) \left[ \lsvr{0}^{-1} \lsmn{0} + \pt \lgmomapprox{}(\lslin)^T \lgmov^{-1} \obapprox{}(\lslin) \right] \label{eq:approx_gaussian_mean_variance}      ,
\end{IEEEeqnarray}
which parameterise the following density sequence,
\begin{IEEEeqnarray}{rClCrCl}
 \seqdenapprox{\pt|\lslin}(\ls{}) & = & \normalden{\ls{}}{\lsmnapprox{\pt}(\lslin)}{\lsvrapprox{\pt}(\lslin)} & \qquad & \logseqdenapprox{\pt|\lslin}(\ls{}) = \log\left(\seqdenapprox{\pt|\lslin}(\ls{})\right) \nonumber      .
\end{IEEEeqnarray}
For this linear Gaussian approximation, the exact drift and diffusion are,
\begin{IEEEeqnarray}{rCl}
 \flowdriftapprox{\pt}(\ls{\pt} \cent \lslin) & = & \lsvrapprox{\pt}(\lslin) \lgmomapprox{}(\lslin)^T \lgmov^{-1} \left( \left(\obapprox{}(\lslin) - \lgmomapprox{}(\lslin) \lsmnapprox{\pt}(\lslin) \right) - \half \lgmomapprox{}(\lslin) (\ls{\pt}-\lsmnapprox{\pt}(\lslin)) \right) \nonumber \\
 & & \qquad\qquad\qquad\qquad\qquad\qquad\qquad\qquad\qquad\qquad -\: \half \dsf (\ls{\pt}-\lsmnapprox{\pt}(\lslin)) \nonumber \\
 \flowdiffuseapprox{\pt}(\lslin) & = & \dsf^{\half} \lsvrapprox{\pt}(\lslin)^{\half} \label{eq:approx_gaussian_flow_drift_diffusion}      .
\end{IEEEeqnarray}
We define the \emph{approximate Gaussian flow} using these expressions, with the linearisation conducted around the current state,
\begin{IEEEeqnarray}{rCl}
 d\ls{\pt} & = & \flowdriftapprox{\pt}(\ls{\pt} \cent \ls{\pt}) d\pt + \flowdiffuseapprox{\pt}(\ls{\pt}) d\flowbm{\pt} 
\end{IEEEeqnarray}
The choice of $\lslin=\ls{\pt}$ ensures that $\lhoodapprox{}(\ls{\pt} \: ; \: \ls{\pt}) = \lhood(\ls{\pt})$, and similar equivalence for the derivatives. Using the governing equation for exact particle flows and considering both $\seqden{\pt}(\ls{})$ and $\seqdenapprox{\pt|\ls{\pt}}(\ls{})$, it is then straightforward to show from \eqref{eq:optimal_flow_pde} that the approximate Gaussian flow will be optimal if,
\begin{IEEEeqnarray}{rCl}
 \expect{\seqden{\pt}}\left[ \loglhood \right] - \expect{\seqdenapprox{\pt|\ls{\pt}}}\left[ \loglhood \right] + \text{terms involving $\frac{d^2\obsfun}{d\ls{\pt}^2}$} & = & 0     .
\end{IEEEeqnarray}
where $\frac{d^2\obsfun}{d\ls{\pt}^2}$ is the tensor of second derivatives of the observation function.

If $\frac{d^2\obsfun}{d\ls{}^2}=0$ for all $\ls{}$, then the model is linear and Gaussian and we recover the exact Gaussian flow. However, the flow is still optimal in the more general case where $\frac{d^2\obsfun}{d\ls{\pt}^2}=0$ only at the current state $\ls{\pt}$, and also $\expect{\seqden{\pt}}\left[ \loglhood \right] - \expect{\seqdenapprox{\pt|\ls{\pt}}}\left[ \loglhood \right]=0$. Hence, the use of an approximate Gaussian flow implies two assumptions, that the second derivatives are small along the particle trajectory, and that the expected log-likelihood can be well-approximated using a Gaussian density.

An illustration of approximate Gaussian flow is shown in figure~\ref{approx_gaussian_flow_example}.

\begin{figure}[bt]
\centering
\subfloat[]{ \includegraphics{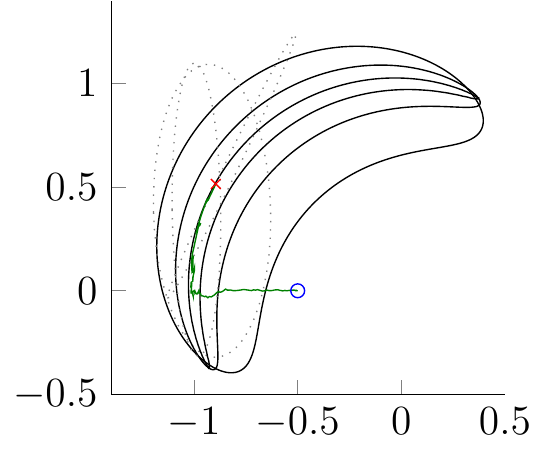} }
\subfloat[]{ \includegraphics{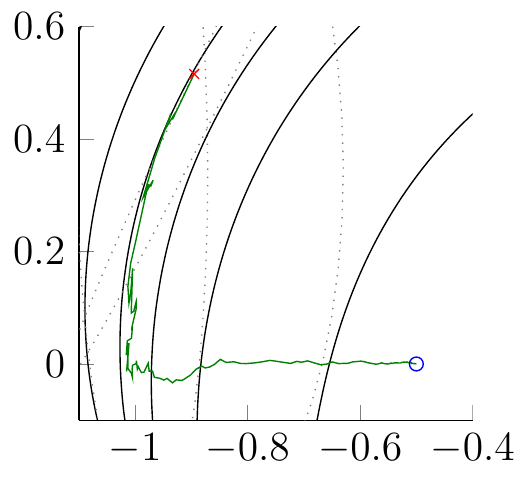} }
\captionsetup{singlelinecheck=off}
\caption[.]{An illustration of an approximate Gaussian flow for a nonlinear Gaussian model. Solid contours represent the evolution of the true density sequence, and dotted contours those of the Gaussian approximations at the same times. The resulting path of a particle using $\dsf=0.1$ is also shown. The second panel shows a close-up view of the trajectory. Model parameters are $\lsmn{0}=\begin{bmatrix}-0.4 & -0.4\end{bmatrix}^T$, $\lsvr{0}=0.5 I$, $\ob{}=1$, $\lgmov=0.001$, $\obsfun(\ls{})=\sqrt{\ls{1}^2+\ls{2}^2}$.}
\label{approx_gaussian_flow_example}
\end{figure}

\subsection{Numerical Integration of the Approximate Gaussian Flow}

To implement the particle flow importance sampling algorithm, we need a numerical integration scheme which will allow us to approximately sample a joint trajectory for each particle state and its associated importance weight. This could be achieved using the Euler method. However, a more accurate option is available to us which exploits the analytical solution to the flow for the linear Gaussian model.

\subsubsection{State Integration}
The SDE for the approximate Gaussian flow can be written in the following form,
\begin{IEEEeqnarray}{rCl}
 d\ls{\pt} & = & \left[\flowdriftscale{\pt}(\ls{\pt}) \ls{\pt} + \flowdriftconst{\pt}(\ls{\pt}) \right] d\pt + \flowdiffuseapprox{\pt}(\ls{\pt}) d\flowbm{\pt} \label{eq:state_sde_agf}     ,
\end{IEEEeqnarray}
where
\begin{IEEEeqnarray}{rCl}
 \flowdriftscale{\pt}(\ls{\pt}) & = & - \half \lsvrapprox{\pt}(\ls{\pt}) \lgmomapprox{}(\ls{\pt})^T \lgmov^{-1} \lgmomapprox{}(\ls{\pt}) - \half \dsf I \nonumber \\
 \flowdriftconst{\pt}(\ls{\pt}) & = & \lsvrapprox{\pt}(\ls{\pt}) \lgmomapprox{}(\ls{\pt})^T \lgmov^{-1} \left(\obapprox{}(\ls{\pt}) - \half \lgmomapprox{}(\ls{\pt}) \lsmnapprox{\pt}(\ls{\pt}) \right) + \half \dsf \lsmnapprox{\pt}(\ls{\pt}) \nonumber      .
\end{IEEEeqnarray}
For an integration step from $\pt_0$ to $\pt_1$, the Euler method provides an approximate value of $\ls{\pt_1}$ which is an exact solution to the SDE with terms fixed at $\pt_0$ and $\ls{\pt_0}$,
\begin{IEEEeqnarray}{rCl}
 d\ls{\pt} & = & \left[\flowdriftscale{\pt_0}(\ls{\pt_0}) \ls{\pt_0} + \flowdriftconst{\pt_0}(\ls{\pt_0})\right] d\pt + \flowdiffuseapprox{\pt_0}(\ls{\pt_0}) d\flowbm{\pt} \label{eq:state_sde_agf_euler}     .
\end{IEEEeqnarray}
As the integration step size goes to $0$, trajectories generated using the Euler method become exact samples according to the true SDE. For our flow, we can instead fix only the state used to form the linear approximation,
\begin{IEEEeqnarray}{rCl}
 d\ls{\pt} & = & \left[\flowdriftscale{\pt}(\ls{\pt_0}) \ls{\pt} + \flowdriftconst{\pt}(\ls{\pt_0}) \right] d\pt + \flowdiffuseapprox{\pt}(\ls{\pt_0}) d\flowbm{\pt} \label{eq:state_sde_agf_gaussian}     .
\end{IEEEeqnarray}
This implies matching some additional $\bigo{\dpt}$ and $\bigo{\dls}$ terms in the Taylor expansions of $\flowdriftapprox{\pt}(\ls{\pt})$ and $\flowdiffuseapprox{\pt}(\ls{\pt})$. Since the resulting SDE describes a Gaussian flow, it may be solved exactly using \eqref{eq:state_update}, leading in this case to,
\begin{IEEEeqnarray}{rCl}
 \ls{\pt_1} & = & \lsmnapprox{\pt_1}(\ls{\pt_0}) + \exp\left\{-\half\dsf(\pt_1-\pt_0)\right\} \left(\lsvrapprox{\pt_1}(\ls{\pt_0})\lsvrapprox{\pt_0}(\ls{\pt_0})^{-1}\right)^{\half}(\ls{\pt_0}-\lsmnapprox{\pt_0}(\ls{\pt_0})) \nonumber \\
 & & \qquad \qquad \qquad \qquad + \: \left[ \frac{1-\exp\left\{-\dsf(\pt_1-\pt_0)\right\}}{\pt_1-\pt_0} \right]^{\half} \lsvrapprox{\pt_1}(\ls{\pt_0})^{\half} \left(\flowbm{\pt_1}-\flowbm{\pt_0}\right) \label{eq:approx_state_update}     .
\end{IEEEeqnarray}
From theorem~\ref{theo:integrated_gaussian_flow}, as the step size goes to $0$, this recovers the ideal continuous time behaviour for the approximate Gaussian flow.

\subsubsection{Weight Integration}
A corresponding approximate weight update may be conducted by conditioning on the sampled value of $\left(\flowbm{\pt_1}-\flowbm{\pt_0}\right)$ and using the Jacobian of \eqref{eq:approx_state_update},
\begin{IEEEeqnarray}{rCl}
 \pw{\pt_1} & = & \pw{\pt_0} \times \frac{ \seqden{\pt_1}(\ls{\pt_1}) }{ \seqden{\pt_0}(\ls{\pt_0}) } \times \determ{ \pdv{\ls{\pt_1}}{\ls{\pt_0}} } \nonumber \\
 & \propto & \pw{\pt_0} \times \frac{ \priorden(\ls{\pt_1}) \lhood(\ls{\pt_1})^{\pt_1} }{ \priorden(\ls{\pt_0}) \lhood(\ls{\pt_0})^{\pt_0} } \times \determ{ \pdv{\ls{\pt_1}}{\ls{\pt_0}} } \label{eq:approx_weight_update}      .
\end{IEEEeqnarray}
This is approximate, in the sense that it does not result in properly weighted samples, because the state update is not in general an invertible function due to the nonlinear dependence on $\ls{\pt_0}$. However, we can establish the following result.

\begin{theorem} \label{theo:weight_numerical_integration}
 A particle with state $\ls{\pt}$ moved according to \eqref{eq:approx_state_update} and weight $\pw{\pt}$ according to \eqref{eq:approx_weight_update}, is properly weighted according to $\seqden{\pt}$ as the integration step size goes to $0$. For proof see appendix~\ref{app:weight_numerical_integration}.
\end{theorem}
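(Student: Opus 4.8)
The plan is to show that, in the limit $\dpt = \pt_1 - \pt_0 \to 0$, the one-step log-weight increment implied by \eqref{eq:approx_weight_update} converges to the drift and diffusion of the practical weight SDE \eqref{eq:practical_weight_differential_equation}. Since Theorem~\ref{theo:practical_weight} guarantees that a particle evolving under \eqref{eq:state_sde} with weights following \eqref{eq:practical_weight_differential_equation} is properly weighted with respect to $\seqden{\pt}$, and Theorem~\ref{theo:integrated_gaussian_flow} guarantees that the state map \eqref{eq:approx_state_update} recovers the correct state SDE as $\dpt\to 0$, matching the weight increment to the correct order will establish the claim. Writing the increment of $\logpw{}$ from the logarithm of \eqref{eq:approx_weight_update} as
\[
 \logpw{\pt_1} - \logpw{\pt_0} = \left[\logseqden{\pt_1}(\ls{\pt_1}) - \logseqden{\pt_0}(\ls{\pt_0})\right] + \log\determ{\pdv{\ls{\pt_1}}{\ls{\pt_0}}},
\]
the task splits into expanding the log-density difference and the log-Jacobian to first order in $\dpt$, treating the Brownian increment $(\flowbm{\pt_1}-\flowbm{\pt_0})$ as $\bigo{\dpt^{1/2}}$ and applying the It\={o} rule that its outer product contributes $\dpt\,\idmatrix$ at leading order.

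First I would expand \eqref{eq:approx_state_update}, using Theorem~\ref{theo:integrated_gaussian_flow} and the limit $\frac{1-\exp\{-\dsf\dpt\}}{\dpt}\to\dsf$, to obtain $\ls{\pt_1}-\ls{\pt_0} = \flowdriftapprox{\pt_0}(\ls{\pt_0})\dpt + \flowdiffuseapprox{\pt_0}(\ls{\pt_0})(\flowbm{\pt_1}-\flowbm{\pt_0}) + o(\dpt)$, so that the discretised state increment carries the correct drift and diffusion. Substituting this into a Taylor expansion of $\logseqden{\pt_1}(\ls{\pt_1}) - \logseqden{\pt_0}(\ls{\pt_0})$ in both $\pt$ and $\ls{}$, the $\pt$-derivative of $\logseqden{\pt} = \logprior + \pt\loglhood - \log\nconst{\pt}$ yields the $\loglhood(\ls{\pt}) - \expect{\seqden{\pt}}[\loglhood]$ term (the normalising-constant contribution following from $\frac{d}{d\pt}\log\nconst{\pt} = \expect{\seqden{\pt}}[\loglhood]$, which is state-independent and cancels on normalisation, consistent with the proportionality form of \eqref{eq:approx_weight_update}); the first-order spatial term gives $\pdv{\logseqden{\pt}}{\ls{\pt}}^T\flowdriftapprox{\pt}\,\dpt$ plus the stochastic term $\pdv{\logseqden{\pt}}{\ls{\pt}}^T\flowdiffuseapprox{\pt}(\flowbm{\pt_1}-\flowbm{\pt_0})$; and the second-order spatial term, after applying the It\={o} rule, gives $\trace[\flowcov{\pt}\ppdv{\logseqden{\pt}}{\ls{\pt}}]\dpt$. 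These already account for four of the terms in \eqref{eq:practical_weight_differential_equation}.

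The remaining work, and the technical heart of the proof, is the log-Jacobian. Differentiating \eqref{eq:approx_state_update} with respect to $\ls{\pt_0}$ with the Brownian increment held fixed gives $\pdv{\ls{\pt_1}}{\ls{\pt_0}} = \idmatrix + \pdv{\flowdriftapprox{\pt}}{\ls{\pt}}\dpt + M + o(\dpt)$, where $M$ is the $\bigo{\dpt^{1/2}}$ matrix with entries $\sum_j \pdv{\flowdiffuseapprox{\pt,ij}}{\ls{\pt,k}}(\flowbm{\pt_1}-\flowbm{\pt_0})_j$ arising from the $\ls{\pt_0}$-dependence of the diffusion coefficient. Expanding $\log\det(\idmatrix + A) = \trace A - \half\trace(A^2) + \cdots$ with $A = \pdv{\flowdriftapprox{\pt}}{\ls{\pt}}\dpt + M$, the linear part produces $\trace[\pdv{\flowdriftapprox{\pt}}{\ls{\pt}}]\dpt$ together with the stochastic term $\sum_{ij}\pdv{\flowdiffuseapprox{\pt,ij}}{\ls{\pt,i}}(\flowbm{\pt_1}-\flowbm{\pt_0})_j$, while $-\half\trace(M^2)$, after applying the It\={o} rule to the product of two Brownian increments, collapses to $-\half\sum_{ijk}\pdv{\flowdiffuseapprox{\pt,ik}}{\ls{\pt,j}}\pdv{\flowdiffuseapprox{\pt,jk}}{\ls{\pt,i}}\dpt$. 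Summing the two contributions reproduces exactly the right-hand side of \eqref{eq:practical_weight_differential_equation}, so the discrete scheme converges to that SDE and the conclusion follows from Theorem~\ref{theo:practical_weight}.

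I expect the main obstacle to be the careful bookkeeping in the log-Jacobian: one must verify that the extra nonlinear dependence of \eqref{eq:approx_state_update} on $\ls{\pt_0}$ through $\lsmnapprox{\pt_1}$, $\lsvrapprox{\pt_1}$ and the matrix square roots contributes only at $\bigo{\dpt}$ beyond the Euler-level Jacobian (and hence does not affect the limit), and that the $\bigo{\dpt^{1/2}}$ stochastic part $M$ is retained through \emph{both} the linear and the quadratic terms of $\log\det(\idmatrix+A)$, since its quadratic self-product is precisely what generates the $\bigo{\dpt}$ diffusion-derivative correction. A secondary point requiring care is invoking a standard weak-convergence result for one-step consistent discretisations in order to pass from the matched per-step increments to proper weighting of the limiting continuous-time process.
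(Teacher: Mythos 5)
Your overall strategy coincides with the paper's: show that the one-step update \eqref{eq:approx_weight_update} agrees, up to a per-step error vanishing faster than $\dpt$, with a discrete weight recursion whose limit is \eqref{eq:practical_weight_differential_equation}, and then invoke theorem~\ref{theo:practical_weight}. The difference is one of economy. The paper notes that \eqref{eq:approx_weight_update} uses the forward Jacobian while the extended-space construction behind theorem~\ref{theo:practical_weight} uses the inverse one, observes $\log\determ{\pdv{\ls{\pt_1}}{\ls{\pt_0}}} = -\log\determ{\pdv{\ls{\pt_0}}{\ls{\pt_1}}} + \bigo{(\pt_1-\pt_0)^{3/2}}$, and thereby identifies the update with \eqref{app-eq:discrete_log_weight}, whose limiting behaviour was already computed in appendix~\ref{app:practical_weight}. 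You instead re-derive that limit directly for the forward Jacobian; your term matching (It\={o} expansion of the log-density ratio, together with $\trace A - \half \trace (A^2)$ applied to $A = \pdv{\flowdriftapprox{\pt}}{\ls{\pt}}\dpt + M$, with $-\half\trace(M^2)$ yielding the diffusion-derivative correction) is correct and essentially reproduces the appendix~\ref{app:practical_weight} computation rather than citing it.

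There is, however, a genuine error in the bookkeeping of your final paragraph, where you assert that the dependence of \eqref{eq:approx_state_update} on $\ls{\pt_0}$ through $\lsmnapprox{\pt_1}$, $\lsvrapprox{\pt_1}$ and the matrix square roots contributes only at $\bigo{\dpt}$ beyond the Euler-level Jacobian and hence does not affect the limit. This is wrong on both counts. First, per-step contributions of order $\dpt$ to the log-Jacobian accumulate over the $\bigo{1/\dpt}$ steps needed to traverse $[0,1]$, so they most certainly affect the limit; only $\bigo{\dpt^{3/2}}$ per-step errors are negligible. Second, those terms are not errors to be discarded: the drift of the limiting SDE is $\flowdriftapprox{\pt}(\ls{\pt}\cent\ls{\pt})$, with the linearisation point tracking the current state, so the quantity $\pdv{\flowdriftapprox{\pt}}{\ls{\pt}}$ entering \eqref{eq:practical_weight_differential_equation} must be the \emph{total} derivative, including differentiation through the linearisation point. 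The $\bigo{\dpt}$ Jacobian contributions from $\lsmnapprox{}$, $\lsvrapprox{}$ and the square roots are precisely what upgrades the frozen-coefficient (partial) derivative to this total derivative; dropping them leaves a weight equation that does not correspond to the state SDE the particles actually follow, and theorem~\ref{theo:practical_weight} then no longer applies. What must instead be shown to be negligible is the $\bigo{\dpt^{3/2}}$ discrepancy between the exact-flow map \eqref{eq:approx_state_update} and its Euler counterpart, and this agreement is needed in a $C^1$ sense so that the Jacobians, not just the states, match to that order. With that correction your argument closes.
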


Using the chain rule, the $(i,j)$th element of the Jacobian matrix is,
\begin{IEEEeqnarray}{rCl}
 \left[\pdv{\ls{\pt_1}}{\ls{\pt_0}}\right]_{i,j} & = & \left[ \pdv{\lsmnapprox{\pt_1}}{\ls{\pt_0}} + \exp\left\{-\half\dsf(\pt_1-\pt_0)\right\} \left(\lsvrapprox{\pt_1}(\ls{\pt_0})\lsvrapprox{\pt_0}(\ls{\pt_0})^{-1}\right)^{\half}\left(I-\pdv{\lsmnapprox{\pt_0}}{\ls{\pt_0}}\right) \right]_{i,j} \nonumber \\
 & & + \: \left( \frac{1-\exp\left\{-\dsf(\pt_1-\pt_0)\right\}}{\pt_1-\pt_0} \right)^{\half} \sum_k \left[\pdv{\lsvrapprox{\pt_1}^{\half}}{\ls{\pt_0,j}}\right]_{i,k} \left[\flowbm{\pt_1}-\flowbm{\pt_0}\right]_k \nonumber \\
 & & + \: \exp\left\{-\half\dsf(\pt_1-\pt_0)\right\} \sum_k \left[\pdv{\left(\lsvrapprox{\pt_1}\lsvrapprox{\pt_0}^{-1}\right)^{\half}}{\ls{\pt_0,j}}\right]_{i,k} (\ls{\pt_0}-\lsmnapprox{\pt_0}) 
\end{IEEEeqnarray}
The $(j)$th column of the derivative of $\lsmnapprox{\pt}(\ls{\pt_0})$ is given by,
\begin{IEEEeqnarray}{rCl}
 \pdv{\lsmnapprox{\pt}}{\ls{\pt_0,j}} & = & \pt \lsvrapprox{\pt}(\ls{\pt_0}) \Bigg( \pdv{\lgmomapprox{}^T}{\ls{\pt_0,j}} \lgmov^{-1}\left(\obapprox{}(\ls{\pt_0})-\lgmomapprox{}(\ls{\pt_0})\lsmnapprox{\pt}(\ls{\pt_0})\right) \nonumber \\
 & & \qquad\qquad\qquad\qquad\qquad +\: \lgmomapprox{}(\ls{\pt_0})^T\lgmov^{-1}\pdv{\lgmomapprox{}}{\ls{\pt_0,j}}\left(\ls{\pt_0}-\lsmnapprox{\pt}(\ls{\pt_0})\right) \Bigg)     ,
\end{IEEEeqnarray}
where $\pdv{\lgmomapprox{}^T}{\ls{\pt_0,j}}$ is a matrix whose $(i,k)$th term is $\mpdv{\obsfun_i}{\ls{\pt_0,j}}{\ls{\pt_0,k}}$. The two matrix square root derivatives may be evaluated by observing that since $A^{\half}A^{\half}=A$, then,
\begin{IEEEeqnarray}{rCl}
 A^{\half} \pdv{A^{\half}}{\ls{\pt_0,j}} + \pdv{A^{\half}}{\ls{\pt_0,j}} A^{\half} = \pdv{A}{\ls{\pt_0,j}} \nonumber     ,
\end{IEEEeqnarray}
and hence the derivative may be found by solving a Sylvester equation by standard methods \citep{Bartels1972}. On the right hand side of these equations we need,
\begin{IEEEeqnarray}{rCl}
 \pdv{\lsvrapprox{\pt_1}}{\ls{\pt_0,j}} & = & -\pt_1 \lsvrapprox{\pt_1}(\ls{\pt_0}) \left( \pdv{\lgmomapprox{}^T}{\ls{\pt_0,j}}\lgmov^{-1}\lgmomapprox{}(\ls{\pt_0}) + \lgmomapprox{}(\ls{\pt_0})\lgmov^{-1}\pdv{\lgmomapprox{}^T}{\ls{\pt_0,j}} \right) \lsvrapprox{\pt_1}(\ls{\pt_0}) \nonumber \\
 \pdv{\left(\lsvrapprox{\pt_1}\lsvrapprox{\pt_0}^{-1}\right)}{\ls{\pt_0,j}} & = & \lsvrapprox{\pt_1}(\ls{\pt_0}) \left( \pdv{\lgmomapprox{}^T}{\ls{\pt_0,j}}\lgmov^{-1}\lgmomapprox{}(\ls{\pt_0}) + \lgmomapprox{}(\ls{\pt_0})\lgmov^{-1}\pdv{\lgmomapprox{}^T}{\ls{\pt_0,j}} \right) \nonumber \\
 & & \qquad\qquad\qquad\qquad \times \left(\pt_0 I - \pt_1 \lsvrapprox{\pt_1}(\ls{\pt_0})\lsvrapprox{\pt_0}(\ls{\pt_0})^{-1}\right) \nonumber     .
\end{IEEEeqnarray}

\subsubsection{Step Size Control} \label{sec:step_size}
Effective numerical integration requires careful consideration of the integration step sizes. Using smaller step sizes reduces the associated errors, resulting in a path which more accurately represents a sample from the approximate Gaussian flow. However, in practice the number of steps needs to be kept fairly low, to minimise the computational burden. In some instances, it may be sufficient to use a fixed step size, or a predetermined time grid chosen with a tuning run. However, an adaptive scheme is preferable for greatest efficiency.

Adaptation may be conducted by forming an estimate of the error introduced by the numerical integration scheme, and adjusting the step size to keep this below a predetermined threshold. When $\dsf>0$ and the flow is stochastic, the error is calculated conditional on the sampled path of the Brownian motion. Step size adaptation is performed independently for each particle.

To form a local error estimate for each integration step, we can compare the SDE for the approximate Gaussian flow \eqref{eq:state_sde_agf} with that implied by the numerical integration \eqref{eq:state_sde_agf_gaussian}. The difference between them describes the evolution of the integration error,
\begin{IEEEeqnarray}{rCl}
 d\errstat{\pt} & = & \left[ \left( \flowdriftscale{\pt}(\ls{\pt})-\flowdriftscale{\pt}(\ls{\pt_0}) \right) \ls{\pt} + \left(\flowdriftconst{\pt}(\ls{\pt})-\flowdriftconst{\pt}(\ls{\pt_0})\right) \right] + \left[ \flowdiffuseapprox{\pt}(\ls{\pt})-\flowdiffuseapprox{\pt}(\ls{\pt_0}) \right] d\flowbm{\pt} \nonumber      .
\end{IEEEeqnarray}
Integrating from $\pt_0$ to $\pt_1$ and approximating each integrand with the average of its initial and final value,
\begin{IEEEeqnarray}{rCl}
 \errstat{\pt_1|\pt_0} & = & \int_{\pt_0}^{\pt_1} \left[ \left( \flowdriftscale{\pt}(\ls{\pt})-\flowdriftscale{\pt}(\ls{\pt_0}) \right) \ls{\pt} + \left(\flowdriftconst{\pt}(\ls{\pt})-\flowdriftconst{\pt}(\ls{\pt_0})\right) \right] d\pt \nonumber \\
 & & \qquad\qquad\qquad\qquad\qquad +\: \int_{\pt_0}^{\pt_1} \left[ \flowdiffuseapprox{\pt}(\ls{\pt})-\flowdiffuseapprox{\pt}(\ls{\pt_0}) \right] d\flowbm{\pt} \nonumber \\
 & \approx & \half (\pt_1-\pt_0) \left[ \left( \flowdriftscale{\pt_1}(\ls{\pt_1})-\flowdriftscale{\pt_1}(\ls{\pt_0}) \right) \ls{\pt_1} + \left(\flowdriftconst{\pt_1}(\ls{\pt_1})-\flowdriftconst{\pt_0}(\ls{\pt_0})\right) \right] \nonumber \\
 & & \qquad\qquad\qquad\qquad\qquad +\: \half \left[ \flowdiffuseapprox{\pt_1}(\ls{\pt_1})-\flowdiffuseapprox{\pt_1}(\ls{\pt_0}) \right] (\flowbm{\pt_1}-\flowbm{\pt_0}) \label{eq:local_error_estimate}      .
\end{IEEEeqnarray}

Pseudo-time step sizes may now be adjusted so that the magnitude of this error statistic is kept below a threshold. If the error is too large then the new state is rejected and the step size reduced. Note than when $\dsf>0$ we must be particularly careful with this procedure. Since the weight and step size error calculations are conditioned on the sampled path of the Brownian motion $\flowbm{\pt}$, we cannot simply discard this and sample afresh. Intermediate values must be drawn from a Brownian bridge conditional on the existing skeleton of sampled points.

Mechanisms for adjusting the step sizes may be borrowed directly from well-established numerical integration algorithms for solving differential equations (see for example \citep{Shampine1997}).

Since $(\flowbm{\pt_1}-\flowbm{\pt_0})$ is of the order $\bigo{(\pt_1-\pt_0)^{\half}}$, it is expected that the stochastic term will dominate the integration error unless $\dsf$ is small. This suggests that we should ordinarily use $\dsf=0$. The advantage of using other values is in their use for implementing efficient MCMC kernels, as we discuss in section~\ref{sec:resample_move}.

\subsection{Summary}
Particle flow importance sampling may be conducted by first drawing a set of particles from the prior, and then allowing their states and weights to evolve according to an approximate Gaussian flow over an interval of pseudo time $\pt \in [0,1]$ using numerical integration. Pseudo-code for the procedure is provided in algorithm~\ref{alg:approx_gaussian_flow}.

\begin{algorithm}
\begin{spacing}{1.1}\footnotesize
\begin{algorithmic}[1]
 \REQUIRE Parameters $\lsmn{0}$, $\lsvr{0}$, $\lgmov$, $\ob{}$, $\obsfun$, according to model~\ref{mod:nonlinear_gaussian}.
 \FOR{$i=1$ \TO $\numpart$}
  \STATE Sample $\ls{0}\pss{i} \sim \normalden{\ls{}}{\lsmn{0}}{\lsvr{0}}$.
  \STATE Initialise unnormalised weight $\pw{0}\pss{i}=1$.
  \STATE Initialise Brownian motion $\flowbm{0}\pss{i}=0$.
  \STATE Set $\pt=0$
  \WHILE{$\pt\leq1$}
   \STATE Increment $\pt \leftarrow \min\left\{1, \: \pt+\Delta\pt\right\}$ with $\Delta\pt$ specified by a fixed grid or adaptive method. (See section~\ref{sec:step_size}).
   \STATE Linearise observation function using \eqref{eq:linearisation}.
   \STATE Sample a value for the Brownian motion $\flowbm{\pt}$.
   \STATE Calculate approximate Gaussian moments using \eqref{eq:approx_gaussian_mean_variance}.
   \STATE Advance state using \eqref{eq:approx_state_update}, yielding $\ls{\pt}\pss{i}$.
   \STATE Advance weight using \eqref{eq:approx_weight_update}, yielding $\pw{\pt}\pss{i}$.
  \ENDWHILE
  \STATE Set $\ls{}\pss{i} \leftarrow \ls{1}\pss{i}$ and $\pw{}\pss{i} \leftarrow \pw{1}\pss{i}$.
 \ENDFOR
 \STATE Normalise weights $\npw{}\pss{i} = \pw{}\pss{i} / \sum_j \pw{}\pss{j}$.
 \RETURN Importance weighted posterior samples $\left\{\ls{}\pss{i}, \: \npw{}\pss{i}\right\}$.
\end{algorithmic}
\caption{Gaussian flow importance sampling for the nonlinear Gaussian model.}
\label{alg:approx_gaussian_flow}
\end{spacing}
\end{algorithm}

\subsection{Performance Characterisation}

We now illustrate the the operation of particle flow importance sampling on a simple example model, and use this to explore its dependence on various model and algorithm parameters. The model is defined by the following parameters,
\begin{IEEEeqnarray}{c}
 \lsmn{0} = \mathbf{1} \qquad \lsvr{0}=\sigma_{x}^2 I \qquad \obsfun(\ls{})=\left({\sum_i \ls{i}^2}\right)^{\half} \qquad \lgmov=\sigma_{y}^2 \label{eq:test_model}     .
\end{IEEEeqnarray}
Figure~\ref{fig:gaussian_flow_importance_sampling} shows typical evolution of particle states and weights using a deterministic flow.

\begin{figure}[bt]
\centering
\subfloat[]{ \includegraphics{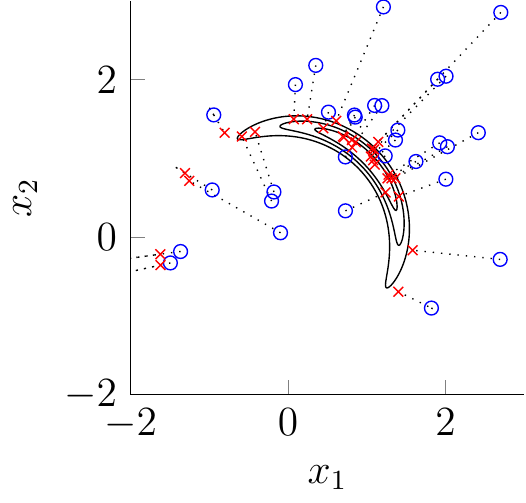} \label{subfig:gaussian_flow_importance_sampling_state}}
\subfloat[]{ \includegraphics{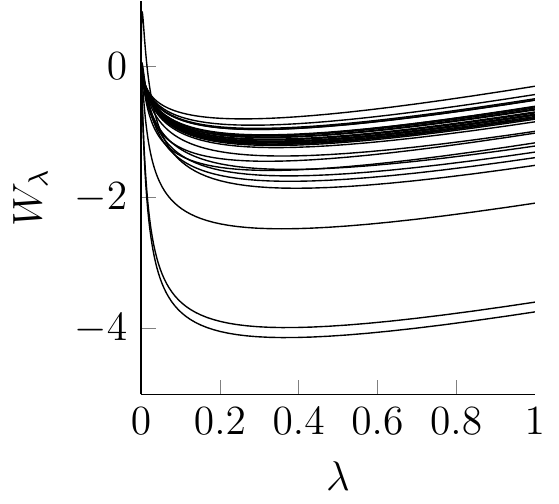} \label{subfig:gaussian_flow_importance_sampling_weight}}
\captionsetup{singlelinecheck=off}
\caption[.]{A simple example of Gaussian flow importance sampling using \eqref{eq:test_model} with $\sigma_{x}=1$ and $\sigma_{y}=0.1$, showing evolution of \subref{subfig:gaussian_flow_importance_sampling_state} the particle states and \subref{subfig:gaussian_flow_importance_sampling_weight} the particle log-weights. Contours of the target posterior are shown with solid lines. Particle paths are shown with dotted lines, with the initial and final state shown by a circle and cross respectively.}
\label{fig:gaussian_flow_importance_sampling}
\end{figure}

In figures~\ref{fig:parameter_plots_sigx} to \ref{fig:parameter_plots_dsf}, we compare the root mean square error (RMSE) and effective sample size (ESS) \citep{Kong1994} obtained using particle flow importance sampling against those obtained from conventional importance sampling using two different choices of importance distribution: the prior, and a Laplace approximation of the posterior formed at the mode. For these results, the particle flow uses a fine fixed grid of pseudo-time steps, in order to give an idea of optimal performance. The dependence on the prior and observation variances and state dimension $\lsdim$ are illustrated, and the effect of the diffusion scale factor $\dsf$.

The Gaussian flow consistently outperforms the simpler samplers, particularly so in the more extreme parameter settings. Particle flow sampling has the greatest advantage when the posterior is ill-conditioned or particularly non-Gaussian. This occurs when the state dimension is greater than the number of observations, and the observations are informative compared to the prior, either because the prior variance is high or the observation variance low. In combination with the nonlinear observation function, this gives rise to complex posterior distributions, in which the modes are often highly ``curved'' (as in figure~\ref{fig:gaussian_flow_importance_sampling}) and thus poorly represented by a single Gaussian.

\begin{figure}[t]
\centering
\subfloat[]{ \includegraphics{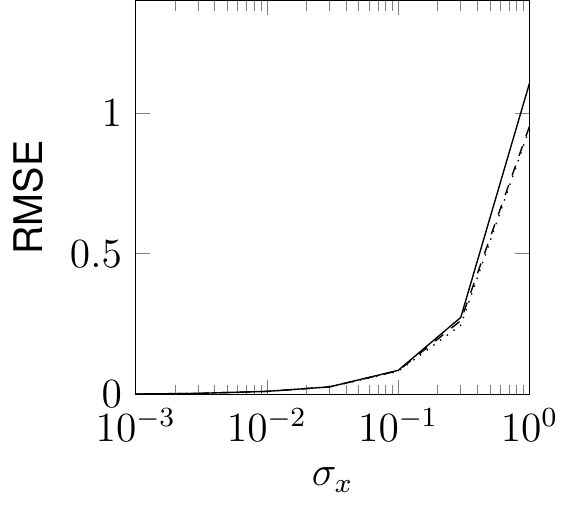} \label{subfig:parameter_testing_rmse_sigx}}
\subfloat[]{ \includegraphics{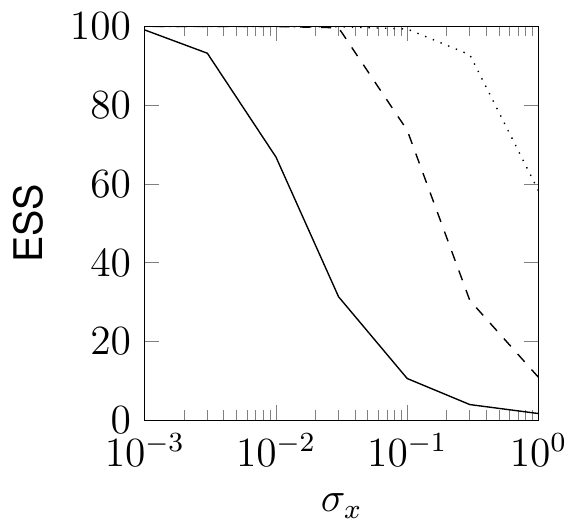} \label{subfig:parameter_testing_ess_sigx}}
\caption[]{Root mean square error \subref{subfig:parameter_testing_rmse_sigx} and effective sample size \subref{subfig:parameter_testing_ess_sigx} with $\lsdim=2$, $\sigma_{y}=0.1$ and varying prior standard deviation $\sigma_{x}$ for three importance samplers. Sampling from the prior (solid), Laplace approximation importance density (dashed), and particle flow with $\dsf=0$ (dotted). $100$ particles for each.}
\label{fig:parameter_plots_sigx}
\end{figure}

\begin{figure}[t]
\centering
\subfloat[]{ \includegraphics{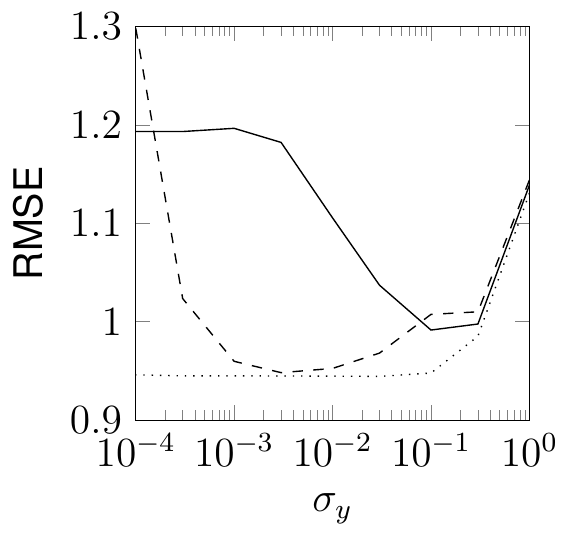} \label{subfig:parameter_testing_rmse_sigy}}
\subfloat[]{ \includegraphics{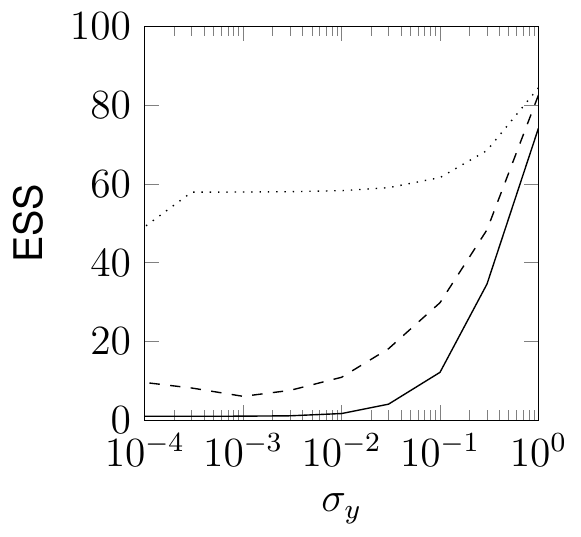} \label{subfig:parameter_testing_ess_sigy}}
\caption[]{Root mean square error \subref{subfig:parameter_testing_rmse_sigy} and effective sample size \subref{subfig:parameter_testing_ess_sigy} with $\lsdim=2$, $\sigma_{x}=1$ and varying observation standard deviation $\sigma_{y}$ for three importance samplers. Sampling from the prior (solid), Laplace approximation importance density (dashed), and particle flow with $\dsf=0$ (dotted). $100$ particles for each.}
\label{fig:parameter_plots_sigy}
\end{figure}

\begin{figure}[t]
\centering
\subfloat[]{ \includegraphics{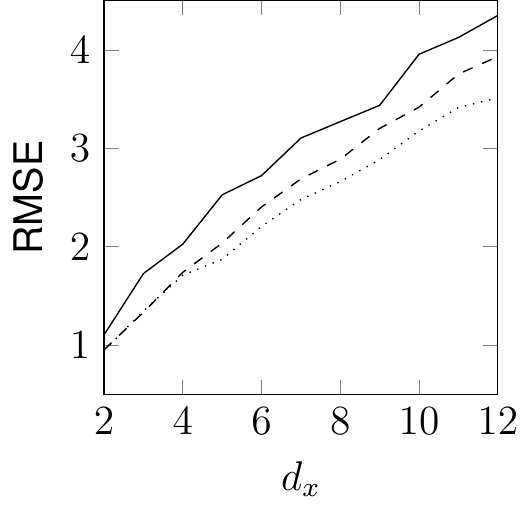} \label{subfig:parameter_testing_rmse_dim}}
\subfloat[]{ \includegraphics{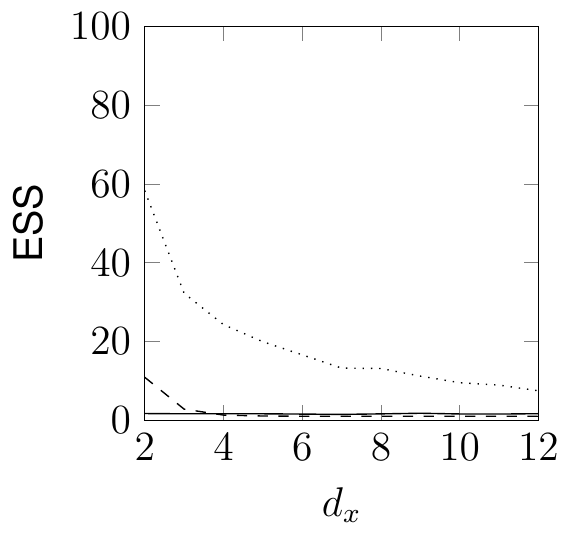} \label{subfig:parameter_testing_ess_dim}}
\caption[]{Root mean square error \subref{subfig:parameter_testing_rmse_dim} and effective sample size \subref{subfig:parameter_testing_ess_dim} with $\sigma_{x}=1$, $\sigma_{y}=0.1$ and varying state dimension for three importance samplers. Sampling from the prior (solid), Laplace approximation importance density (dashed), and particle flow with $\dsf=0$ (dotted). $100$ particles for each.}
\label{fig:parameter_plots_dim}
\end{figure}

\begin{figure}[t]
\centering
\subfloat[]{ \includegraphics{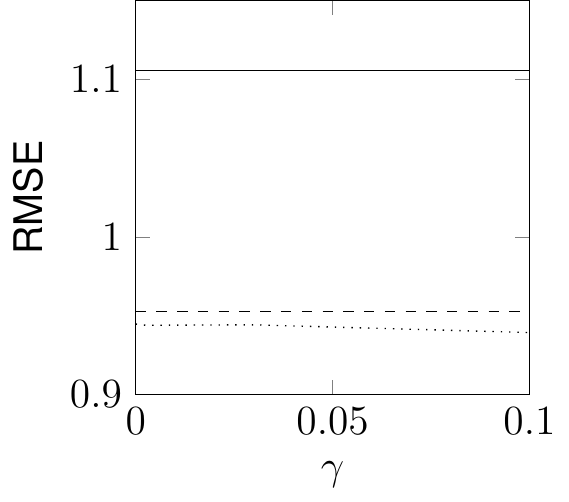} \label{subfig:parameter_testing_rmse_dsf}}
\subfloat[]{ \includegraphics{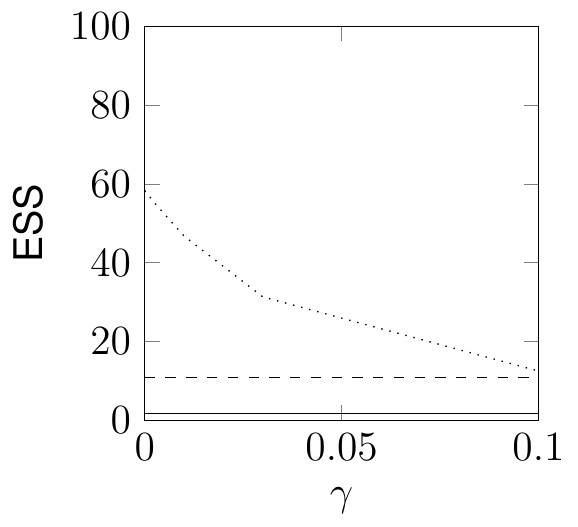} \label{subfig:parameter_testing_ess_dsf}}
\caption[]{Root mean square error \subref{subfig:parameter_testing_rmse_dsf} and effective sample size \subref{subfig:parameter_testing_ess_dsf} with $\lsdim=1$, $\sigma_{x}=1$, $\sigma_{y}=0.1$ for three importance samplers. Sampling from the prior (solid), Laplace approximation importance density (dashed), and particle flow (dotted) with varying diffusion scale factor $\dsf$. $100$ particles for each.}
\label{fig:parameter_plots_dsf}
\end{figure}

In figure~\ref{fig:particle_plots} we provide another demonstration of the benefits of using particle flow importance sampling, this time using a more practical model (a single-frame of the altitude-assisted tracking described in section~\ref{sec:simulations}) and with an implementation of the adaptive step size mechanism. Particle states are shown before and after a resampling step. The number of samples drawn in each case is scaled such that the running time for each is the same. It is clear that the particle flow is better able to characterise the posterior, while doing more than just place particles around a mode.

\begin{figure}
\centering
\subfloat[]{ \includegraphics{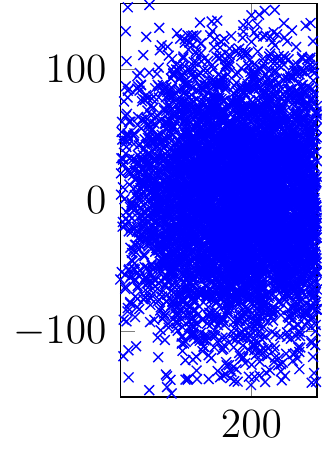}\label{subfig:particles_pre_bootstrap}}
\subfloat[]{ \includegraphics{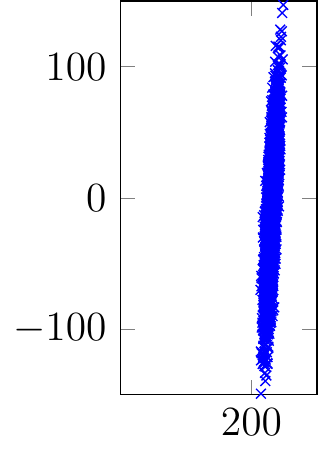}\label{subfig:particles_pre_laplace}}
\subfloat[]{ \includegraphics{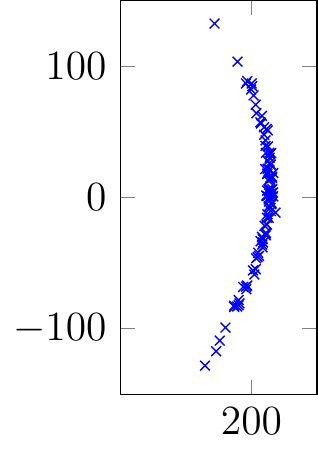}\label{subfig:particles_pre_flow}}\\
\subfloat[]{ \includegraphics{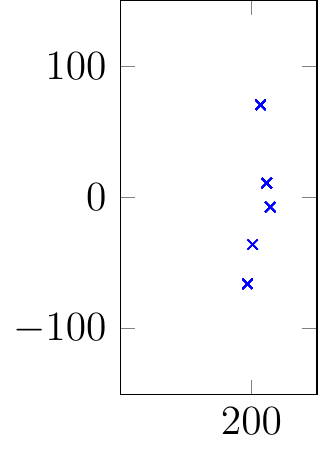}\label{subfig:particles_post_bootstrap}}
\subfloat[]{ \includegraphics{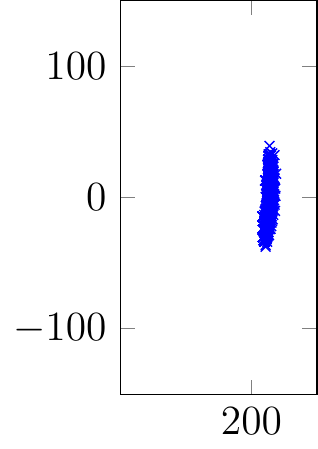}\label{subfig:particles_post_laplace}}
\subfloat[]{ \includegraphics{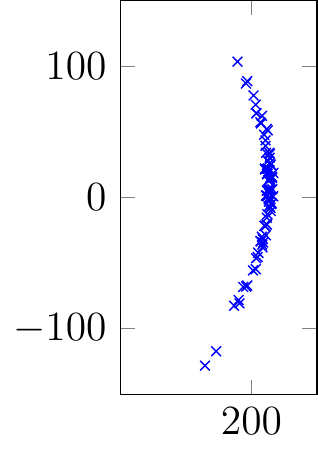}\label{subfig:particles_post_flow}}
\caption[]{Particles sampled before \subref{subfig:particles_pre_bootstrap},\subref{subfig:particles_pre_laplace},\subref{subfig:particles_pre_flow} and after \subref{subfig:particles_post_bootstrap},\subref{subfig:particles_post_laplace},\subref{subfig:particles_post_flow} resampling using three different strategies: sampling from the prior \subref{subfig:particles_pre_bootstrap},\subref{subfig:particles_post_bootstrap}, sampling from a Laplace approximation of the posterior \subref{subfig:particles_pre_laplace},\subref{subfig:particles_post_laplace} and a Gaussian flow \subref{subfig:particles_pre_flow},\subref{subfig:particles_post_flow}.}
\label{fig:particle_plots}
\end{figure}

\subsection{Resample-Move with Particle Flow Proposals} \label{sec:resample_move}

If an importance sampler generates a set of particles which is dominated by a small number with large weights, then the resulting posterior estimates will have a high variance. When this happens, a post-processing stage known as resample-move \citep{Gilks2001} may improve the situation. The weighted particle set is first resampled according to the normalised importance weights to produce an unweighted set. In this standard procedure, low-weight particles are discarded and high-weight particles copied to replace them, with the number of replicates chosen randomly in an appropriate manner so as to ensure unbiasedness \citep{Hol2006,Douc2005}. These replicated particles are then perturbed by sampling from an MCMC kernel so as to spread them around and further explore the promising areas of the state space. Resampling reduces the weight variance of a particle set at the cost of introducing dependence between the particles. The MCMC steps are used to reduce this dependence. Note that the MCMC does not need to be run to convergence in resample-move, since it is being used merely to improve sample diversity.

Implementing resample-move effectively requires some additional algorithm parameters to be selected, such as the number of MCMC steps and an appropriate proposal distribution for Metropolis-Hastings (MH). When particle flow sampling is used, there is an obvious choice for this proposal; simply return to the original state for each particle which was sampled from the prior, $\ls{0}$, and re-simulate a new path through pseudo-time. The choice of proposal distribution is thus reduced to setting a value of $\dsf$, the diffusion scale factor, which will control the size of the proposed moves. Clearly with $\dsf=0$ the motion is deterministic and no move would be taken, i.e. the chain remains stuck in its current location.

Since new values of $\ls{1}$ are to be drawn independently conditional on $\ls{0}$, the acceptance probability is simply that for an MH independence sampler. That is, if the existing state has weight $\pw{}$ (unnormalised, before resampling), and new state for the MH proposal has unnormalised weight $\pw{}\fixed$, then the MH acceptance probability is,
\begin{IEEEeqnarray}{rCl}
 \min\left\{1, \frac{\pw{}\fixed}{\pw{}} \right\}     .
\end{IEEEeqnarray}
Figure~\ref{fig:resample_move} shows two stochastic flows being used for resample-move, illustrating the scope for exploring the state space using this method.

\begin{figure}
\centering
\subfloat[]{\includegraphics{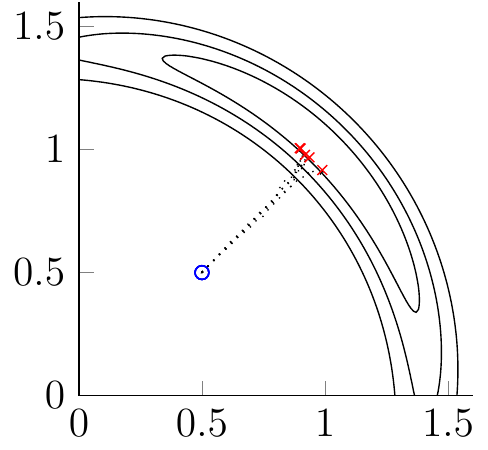}\label{subfig:resample_move_small}}
\subfloat[]{\includegraphics{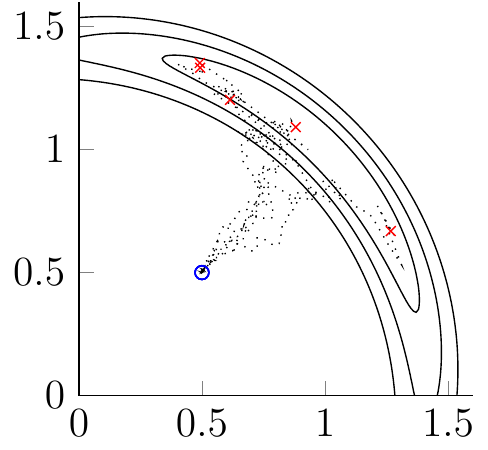}\label{subfig:resample_move_large}}
\caption[]{Particle trajectories for two particle flows targeting the example model \eqref{eq:test_model} using $\sigma_{x}=1$, $\sigma_{y}=0.1$, using \subref{subfig:resample_move_small} $\dsf=0.001$ and \subref{subfig:resample_move_large} $\dsf=0.1$. Contours of the target posterior are shown with solid lines. Particle paths are shown with dotted lines, with the initial and final state shown by a circle and cross respectively. For this easy problem, the acceptance probabilities in both cases are close to $1$.}
\label{fig:resample_move}
\end{figure}

\section{Applications in Particle Filtering} \label{sec:gaussian_flows_for_particle_filters}

Our motivating purpose for studying particle flows is for use in filtering. We consider a standard discrete-time Markovian state space model in which the transition, observation and prior models have closed-form densities,
\begin{IEEEeqnarray}{rClCrClCrCl}
 \ls{\ti} & \sim & \transden(\ls{\ti} | \ls{\ti-1}) & \qquad & \ob{\ti} & \sim & \obsden(\ob{\ti} | \ls{\ti}) & \qquad & \ls{1} & \sim & \priorden(\ls{1})                  \nonumber      ,
\end{IEEEeqnarray}
where the random variable $\ls{\ti}$ is the hidden state of a system at time $\ti$, and $\ob{\ti}$ is an incomplete, noisy observation.

A conventional particle filter \citep{Cappe2007,Doucet2009} uses importance sampling to estimate distributions recursively over the path of the state variables, $\ls{1:\ti}=\{\ls{1}, \dots, \ls{\ti}\}$, such that,
\begin{IEEEeqnarray}{rCl}
 \sum_{i=1}^{\numpart} \npw{\ti}\pss{i} \phi(\ls{1:\ti}\pss{i}) & \rightasconverge & \int \postden(\ls{1:\ti}) \phi(\ls{1:\ti}) d\ls{1:\ti}      \nonumber       .
\end{IEEEeqnarray}
Each step begins by selecting a set of ancestors $\{\anc{\ti}{i}\}$ from amongst the ($\ti-1$)th step particles according to the corresponding weights. Next, a new state is proposed for each particle from an importance density $\ls{\ti}\pss{i} \sim \impden(\ls{\ti} | \ls{\ti-1}\pss{\anc{\ti}{i}}, \ob{\ti})$, and this is concatenated to the ancestral path to form the new particle $\ls{1:\ti}\pss{i} \leftarrow \left\{ \ls{1:\ti-1}\pss{\anc{\ti}{i}},  \ls{\ti}\pss{i} \right\}$. An importance weight is then assigned to the particle to account for the discrepancy between importance and target distributions,
\begin{IEEEeqnarray}{rClCl}
 \pw{\ti}\pss{i} & = & \frac{ \den(\ls{1:\ti}\pss{i} | \ob{1:\ti}) }{ \den(\ls{1:\ti-1}\pss{\anc{\ti}{i}} | \ob{1:\ti-1}) \impden(\ls{\ti}\pss{i} | \ls{\ti-1}\pss{\anc{\ti}{i}}, \ob{\ti}) } & \propto & \frac{ \transden(\ls{\ti}\pss{i} | \ls{\ti-1}\pss{\anc{\ti}{i}}) \obsden(\ob{\ti}|\ls{\ti}\pss{i}) }{ \impden(\ls{\ti}\pss{i} | \ls{\ti-1}\pss{\anc{\ti}{i}}, \ob{\ti}) } \label{eq:particle_filter_weight}     .
\end{IEEEeqnarray}

It was shown by \cite{Doucet2000a} that the incremental weight variance is minimised by proposing from the conditional posterior $\impden(\ls{\ti} | \ls{\ti-1}\pss{\anc{\ti}{i}}, \ob{\ti}) = \den(\ls{\ti} | \ls{\ti-1}\pss{\anc{\ti}{i}}, \ob{\ti})$, known as the optimal importance density (OID). This cannot be used routinely due to an intractable normalising constant required in the weight caluclations.

\subsection{Existing Particle Flow Approaches}

The approach taken by \cite{Daum2008,Daum2011d,Daum2013,Reich2011,Reich2012a} is to apply particle flow sampling directly to the filtering density. Assume that a set of unweighted particles exists approximating $\den(\ls{\ti-1}|\ob{1:\ti-1})$. The predictive density at the next step is related by,
\begin{IEEEeqnarray}{rCl}
 \den(\ls{\ti}|\ob{1:\ti-1}) & = & \int \transden(\ls{\ti}|\ls{\ti-1}) \den(\ls{\ti-1}|\ob{1:\ti-1}) d\ls{\ti-1}     ,
\end{IEEEeqnarray}
which can thus be sampled by simply drawing $\ls{\ti}\pss{i} \sim \transden(\ls{\ti}|\ls{\ti-1}\pss{i})$ for each particle and then marginalising (i.e. discarding) the old states. Defining this predictive density as the prior and the filtering density as the posterior, a particle flow is used to sample from,
\begin{IEEEeqnarray}{rCl}
 \den(\ls{\ti}|\ob{1:\ti}) & = & \frac{\den(\ls{\ti}|\ob{1:\ti-1}) \obsden(\ob{\ti}|\ls{\ti})}{\nconst{\ti}}      .
\end{IEEEeqnarray}
The difficulty with this approach is that finding an appropriate flow generally requires at least the prior and often also its gradient and Hessian to be calculable pointwise. This is not the case for the predictive density, $\den(\ls{\ti}|\ob{1:\ti-1})$. (Note that we could use a Monte Carlo approximation of this density, but the resulting algorithm has a complexity of $\bigo{\numpart^2}$ in the number of particles.) \cite{Reich2011,Reich2012a,Reich2013} address this by making analytical approximations of this density as a Gaussian or Gaussian mixture. \cite{Daum2008,Daum2011d,Daum2013,Daum2009c} use a number of methods, including Gaussian and various numerical approximations. These approximations alter the actual distribution of the particles. The filter no longer returns a properly weighted set of particles representing the posterior and consistent estimates of posterior expectations are no longer guaranteed.

Furthermore, the existing particle flow algorithms do not fall within the framework of ordinary particle filters. They only provide us with an estimate of the marginal filtering density $\den(\ls{\ti}|\ob{1:\ti})$, rather than the more conventional path filtering density $\den(\ls{1:\ti}|\ob{1:\ti})$. This may sometimes be all that is needed, but on other occasions samples of the entire path are essential, for example for smoothing \citep{Kitagawa1996} or parameter estimation schemes, such as particle MCMC \citep{Andrieu2010}.

\subsection{Gaussian Flow Approximations to the Optimal Importance Density}

In this work, we use particle flow sampling within the standard particle filtering framework, thus retaining samples of the entire path and avoiding the need for additional layers of approximation. In order to achieve this, we need to consider two different density sequences. The flow for each particle state is derived by targeting the optimal importance density (OID) with the sequence,
\begin{IEEEeqnarray}{rCl}
 \frac{ \transden(\ls{\ti}|\ls{\ti-1}\pss{\anc{\ti}{i}}) \obsden(\ob{\ti}|\ls{\ti})^{\pt} }{ \nconst{\pt}(\ls{\ti-1}\pss{\anc{\ti}{i}}) }     .
\end{IEEEeqnarray}
This allows us to sample a value for $\ls{\ti}$ conditional on the history $\ls{1:\ti-1}$. Meanwhile, the weight updates are conducted so as to target the filtering density over the entire trajectory, with the sequence,
\begin{IEEEeqnarray}{rCl}
 \frac{ \den(\ls{1:\ti-1}|\ob{1:\ti-1}) \transden(\ls{\ti}|\ls{\ti-1}\pss{\anc{\ti}{i}}) \obsden(\ob{\ti}|\ls{\ti})^{\pt} }{ \nconst{\pt} }     .
\end{IEEEeqnarray}
With this simple modification, the required weight update formula becomes,
\begin{IEEEeqnarray}{rCl}
 \pw{\pt_1} & \propto & \pw{\pt_0} \times \frac{ \obsden(\ob{\ti} | \ls{\pt_1})^{\pt_1} \transden(\ls{\pt_1} | \ls{\ti-1}) }{ \obsden(\ob{\ti} | \ls{\pt_0})^{\pt_0} \transden(\ls{\pt_0} | \ls{\ti-1}) } \times \determ{ \pdv{\ls{\pt_1}}{\ls{\pt_0}} }       .
\end{IEEEeqnarray}

\section{Simulations} \label{sec:simulations}

Numerical testing using simulated data is presented to demonstrate the efficacy of Gaussian flow sampling for particle filtering. We measure performance by considering RMSE values, using the empirical particle mean as a point estimate, and average effective sample size (ESS), measured before resampling \citep{Kong1994}.

The following particle filters (and their respective importance densities) were tested:
\begin{itemize}
        \item A bootstrap filter (BF), using the transition density. \citep{Gordon1993}
        \item An extended particle filter (EPF), using a Gaussian density chosen by linearisation about the predictive mean, in the style of an extended Kalman filter. \citep{Doucet2000a}
        \item An unscented particle filter (UPF), using a Gaussian density chosen using the unscented transform, in the style of an unscented Kalman filter. \citep{Merwe2000}
        \item A Laplace approximation particle filter (LAPF), using a Gaussian density chosen by truncation of the Taylor series of the log of the unnormalised OID around a local maximum \citep{Doucet2000a}. Gradient ascent is used to locate the maximum.
        \item A Gaussian flow particle filter (GFPF), using the the Gaussian flow importance sampling method, with $\dsf=0$. The adaptive step size mechanism is used and requires roughly $5$ to $40$ steps.
\end{itemize}

The posterior filtering distributions of the chosen models can assume complex and irregular shapes, sometimes leading to the complete failure of the EPF and UPF. The LAPF is generally slow because the maximisation procedure struggles with the irregular mode shapes.

The number of particles for the GFPF was set to $100$. For the remaining filters, the number of particles was increased so as to achieve a similar running time. On the altitude-assisted tracking model, the LAPF in fact took roughly 3 times as long as the other algorithms.

\subsection{Models}

\subsubsection{Altitude-Assisted Tracking}
We consider tracking a small aircraft over a mapped landscape, a scenario inspired by \cite{Schon2005}. Time of flight and Doppler measurements from a radio transmitter on the aircraft provide accurate measurements of range $\rng{\ti}$, and range rate $\rngrt{\ti}$, but only a low resolution measurement of bearing $\bng{\ti}$. In addition, accurate measurements are made of the height above the ground $\hei{\ti}$. The profile of the terrain (i.e. the height of the ground above a datum at each point) has been mapped.

At time step $\ti$, the latent state for our model is,
\begin{IEEEeqnarray}{rCl}
 \ls{\ti} & = & \begin{bmatrix} \pos{\ti}^T & \vel{\ti}^T \end{bmatrix}^T \nonumber      ,
\end{IEEEeqnarray}
where $\pos{\ti}$ and $\vel{\ti}$ are the $3$-dimensional position and velocity of the aircraft respectively, and the observation is,
\begin{IEEEeqnarray}{rCl}
 \ob{\ti} & = & \begin{bmatrix} \bng{\ti} & \rng{\ti} & \hei{\ti} & \rngrt{\ti} \end{bmatrix}^T       .
\end{IEEEeqnarray}
The observation function is described by the following equations,
\begin{IEEEeqnarray}{rClCrCl}
 \bng{\ti} & = & \arctan\left(\frac{\pos{\ti,1}}{\pos{\ti,2}}\right) + \noise{\ti,1} & \qquad \qquad & \rng{\ti} & = & \sqrt{ \pos{\ti,1}^2 + \pos{\ti,3}^2 + \pos{\ti,3}^2 }  + \noise{\ti,2} \nonumber \\
 \hei{\ti} & = & \pos{\ti,3} - \terrain( \pos{\ti,1}, \pos{\ti,2} )  + \noise{\ti,3} & \qquad \qquad & \rngrt{\ti} & = & \frac{ \pos{\ti}\cdot\vel{\ti} }{ \rng{\ti} }  + \noise{\ti,4} \nonumber      ,
\end{IEEEeqnarray}
where $\terrain( \pos{\ti,1}, \pos{\ti,2} )$ is the terrain height at the corresponding horizontal coordinates. The four noise terms have independent zero-mean Gaussian densities and the respective variances are $\left(\frac{\pi}{9}\right)^2$, $0.1^2$, $0.1^2$, $0.1^2$. A linear Gaussian near-constant velocity transition model is used \citep{Bar-Shalom1995}, with volatility of $30^2$. The terrain profile was modelled as a mixture of randomly-generated Gaussian blobs. An example is shown in figure~\ref{fig:drone_terrain_map}.

\begin{figure}[bt]
\centering
\includegraphics{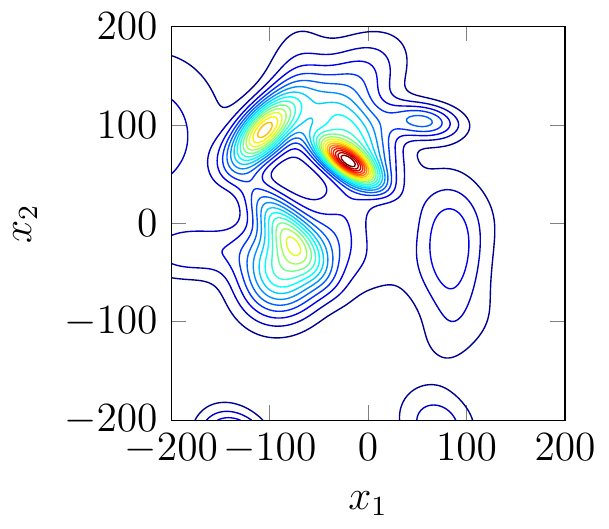}
\caption{Contour plot of an example simulated terrain map.}
\label{fig:drone_terrain_map}
\end{figure}

The accurate measurements of range, range rate and height constrain the region of high posterior probability to lie on a $3$ dimensional subspace, which can take some very irregular shapes.

\subsubsection{Fitting A Skeletal Model}

We consider a toy motion-capture problem, in which camera measurements are used to estimate the pose of a human arm. The latent state consists the 3D coordinates of the shoulder joint $\xS{}$, the orientation $\aB$, the angles of the shoulder $\aS$ and elbow $\aE$, and the lengths of the upper $\dU$ and lower $\dL$ arm. The evolution of each of these is modelled as a random walk with Gaussian noise. The variances are $0.5^2$ for position in the transverse directions, and $0.1^2$ in the depth direction, $\frac{\pi}{18}^2$ for the angles and $0.001^2$ for the lengths (which allows for model inaccuracy, and avoids the need to do static parameter estimation). The observation model consists of two stages. First, the elbow and hand positions are calculated using,
\begin{IEEEeqnarray}{rClCrCl}
 \xE{} = \xS{} + \dU \begin{bmatrix}
                  \cos(\aB)\cos(\aS) \\
                  \sin(\aS) \\
                  \sin(\aB)\cos(\aS)
                 \end{bmatrix} \nonumber & \qquad &
 \xH{} = \xE{} + \dL \begin{bmatrix}
                  \cos(\aB)\cos(\aS+\aE) \\
                  \sin(\aS+\aE) \\
                  \sin(\aB)\cos(\aS+\aE)
                 \end{bmatrix} \nonumber      .
\end{IEEEeqnarray}
Observations of the shoulder and hand positions are made through a perspective projection. By choosing an appropriate coordinate system, this may be modelled simply using,
\begin{IEEEeqnarray}{rCl}
 \obsfun(\ls{}) & = & \begin{bmatrix}
                       \frac{\xS{1}+\xS{3}}{\xS{3}} & \frac{\xS{2}+\xS{3}}{\xS{3}} & \frac{\xE{1}+\xE{3}}{\xE{3}} & \frac{\xE{2}+\xE{3}}{\xE{3}}
                      \end{bmatrix}^T \nonumber      .
\end{IEEEeqnarray}
The observations are accurate, with a variance of $0.001^2$.

\subsection{Results}

Figures~\ref{fig:drone_example_frame_deterministic} and \ref{fig:sineha_example_frame} show the motion of the particles from the GFPF on a typical frame, and the awkward shapes of the posterior mode. Tables~\ref{tab:drone_results} and \ref{tab:sineha_results} show the average ESSs and RMSEs for each algorithm over 100 simulated data sets, each of 100 time steps.

\begin{figure}
\centering
\subfloat[]{\includegraphics{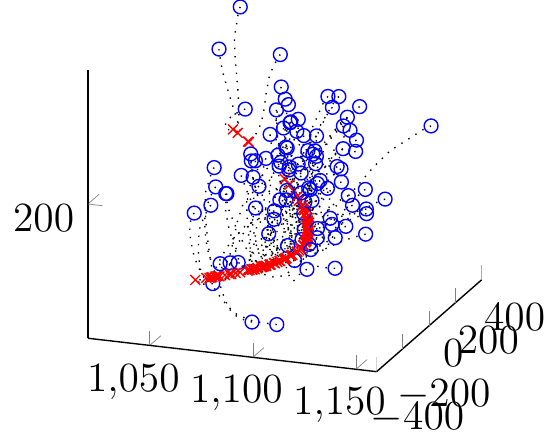}\label{fig:drone_example_frame_deterministic}}
\subfloat[]{\includegraphics{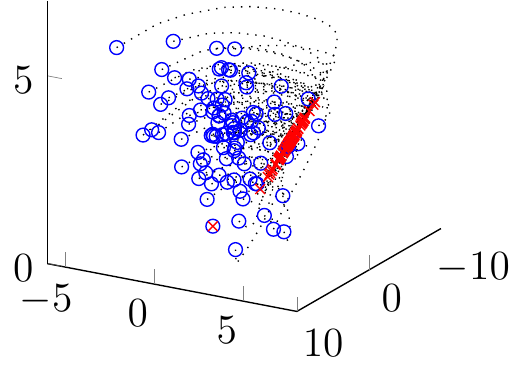}\label{fig:sineha_example_frame}}
\caption{An example of the GFPF particle motion running on the terrain tracking model (a), showing 3D position, and the skeletal arm model (b), showing 3D shoulder position. Prior states are shown with circles and posterior states with crosses.}
\end{figure}

\begin{table}
\centering
\begin{tabular}{l||c|c|c}
Algorithm                                & $N_F$  & ESS  & RMSE \\
\hline
Bootstrap                                & 5000 & 1   & 847 \\
Extended Kalman                          & 1500 & 40  & 417 \\
Unscented Kalman                         & 400  & 18  & 277 \\
Laplace Approximation                    & 100  & 14   & 347 \\
Gaussian Flow                            & 100  & 57  & 171 \\
\end{tabular}
\caption{Algorithm performance results on the altitude-assisted tracking model, showing number of filter particle ($N_F$), effective sample size (ESS), and root mean square error (RMSE).}
\label{tab:drone_results}
\end{table}

\begin{table}
\centering
\begin{tabular}{l||c|c|c}
Algorithm                                & $N_F$ & ESS  & RMSE \\
\hline
Bootstrap                                & 11000 & 1  & 2.6 \\
Extended Kalman                          & 5000  & 17 & 7.2 \\
Laplace Approximation                    & 100   & 5  & 6.8 \\
Gaussian Flow                            & 100   & 58 & 1.3 \\
\end{tabular}
\caption{Algorithm performance results on the skeletal arm model, showing number of filter particle ($N_F$), effective sample size (ESS), and root mean square error (RMSE). The EPF occasionally diverges and fails to complete. These instances are excluded from the results in the table. The UPF regularly fails and is excluded completely.}
\label{tab:sineha_results}
\end{table}

Particle flow resample-move was also tested on the altitude-assisted tracking model. Figure~\ref{fig:drone_example_frame_stochastic} shows the resulting stochastic motion of the particles. Using $\dsf=0.3$, roughly $25$--$50\%$ of the MH steps were accepted. The RMSE performance was not significantly improved.
\begin{figure}
\centering
\subfloat[]{\includegraphics{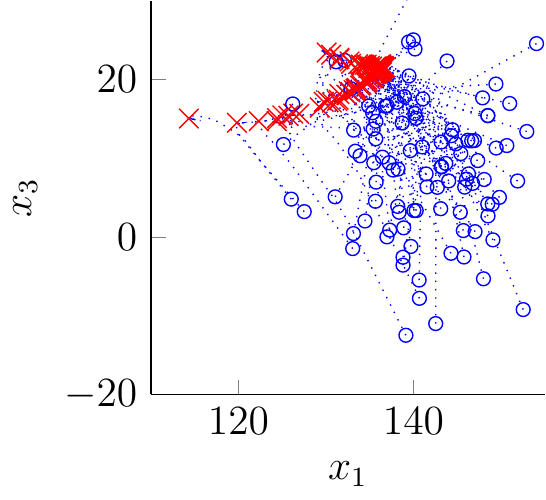}}
\subfloat[]{\includegraphics{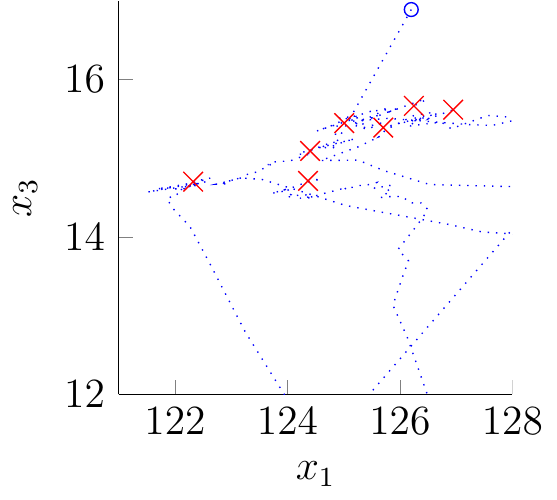}}
\caption{An example of the stochastic GFPF ($\dsf=0.3$) particle motion running on the altitude-assisted tracking model, showing one horizontal and the vertical state component. Prior states are shown with circles and posterior states with crosses. The second panel is a close-up showing the stochastic motion of the particles.}
\label{fig:drone_example_frame_stochastic}
\end{figure}

\section{Discussion and Conclusions}

We have described the use of particle flow importance sampling using an approximate Gaussian flow, and how this may be used to sample approximately from the optimal importance density of a particle filter. The simulations presented in the previous section demonstrate that this procedure is capable of producing better particle approximations (higher effective sample sizes and lower errors) than simpler particle filters (which use a simple Gaussian importance density) on a class of challenging state space models.

The method introduced is appropriate for models with a Gaussian prior and likelihood but highly nonlinear dependence between the observations and latent state. The algorithm requires almost no tuning. The number of particles and the tolerance for the adaptive step-size selection process are the only critical parameters.

The particle flow and optimal transport methods of \citep{Daum2008,Daum2011d,Reich2011,Reich2012a} use similar particle flow ideas to address the task of filtering as we do here. The essential differences in this work are:
\begin{itemize}
  \item We target the optimal importance density rather than the filtering density directly. The OID is known pointwise up to a normalising constant, and thus we avoid the need for one layer of approximation.
  \item In \citep{Daum2008,Daum2011d,Reich2011,Reich2012a}, particle flow samples are used directly to form an approximation of the posterior, with the result that asymptotic convergence properties are lost. We use the particle flow samples as the input to an importance sampler, and correct for the difference between the implied importance density and the posterior density with an appropriate importance weight. \cite{Reich2013} has used a similar importance sampling formulation, but uses different mechanisms to move the particles and assign weights.
  \item We use an improved numerical integration algorithm based on the analytical solution to the optimal Gaussian flow for linear Gaussian models.
\end{itemize}

Particle flow algorithms bear a resemblance to annealing-type strategies \citep{Neal2001,Deutscher2000,Gall2007,DelMoral2006,Godsill2001b,Oudjane2000}, in that both introduce the likelihood progressively. The fundamental difference is that these strategies all use some form of MCMC or resample-move mechanism to remove the weight degeneracy, while particle flow attempts to prevent it happening in the first place. In fact, the two should be seen as complementary. There is no reason why a particle flow could not be used in combination with an annealing scheme. The particles would be moved independently through pseudo-time using a flow, but periodically they are stopped and an intermediate resampling or resample-move step is performed.

Particle flow sampling is only suitable for continuous variables. It should be noted that when the latent state is mixed, with both discrete and continuous components, it is straightforward to sample the discrete component first and then use a particle flow for the continuous part. Furthermore, a number of heavy tailed distributions, including student-t and alpha-stable, can be written as a scale mixture of normals, such that they are Gaussian conditional on an auxiliary scale variable. If this scale variable is sampled first, then a Gaussian flow may be then be used to sample the state. Successful experiments on such models have been conducted already.

In this work we have exclusively used the methods based on the Gaussian flow, due to its stability and desirable analytical solution. Future research will focus on the use of other choices of particle flow for a more general class of models.

\appendix
\singlespacing

\section{Particle Flow Governing Equation: Proof of theorem~\ref{theo:flow_governing_equation}} \label{app:governing_equation}
The proof follows closely the lines taken by \cite{Daum2008}. First, the log-density is,
\begin{IEEEeqnarray}{rCl}
 \logseqden{\pt}(\ls{}) & = & \logprior(\ls{}) + \pt \loglhood(\ls{}) - \log\left(\nconst{\pt}\right) \nonumber     ,
\end{IEEEeqnarray}
where
\begin{IEEEeqnarray}{rClCrCl}
 \logprior(\ls{}) & = & \log\left(\priorden(\ls{})\right) & \qquad \qquad & \loglhood(\ls{}) & = & \log\left(\lhood(\ls{})\right) \nonumber      .
\end{IEEEeqnarray}
Differentiating the log of the normalising constant, we find,
\begin{IEEEeqnarray}{rCl}
 \frac{d}{d\pt}\log\left(\nconst{\pt}\right) & = & \frac{1}{\nconst{\pt}} \frac{d\nconst{\pt}}{d\pt} \nonumber \\
                                               & = & \frac{ \int \priorden(\ls{}) \lhood(\ls{})^\pt \loglhood(\ls{}) d\ls{} }{ \int \priorden(\ls{}) \lhood(\ls{})^\pt d\ls{} } \nonumber \\
                                               & = & \int \seqden{\pt}(\ls{}) \loglhood(\ls{}) d\ls{} = \expect{\seqden{\pt}}\left[ \loglhood \right] \nonumber     ,
\end{IEEEeqnarray}
and so for the log-density,
\begin{IEEEeqnarray}{rCl}
 \pdv{\logseqden{\pt}}{\pt} & = & \loglhood(\ls{}) - \expect{\seqden{\pt}}\left[ \loglhood \right] \label{app-eq:sequence_logdensity}      .
\end{IEEEeqnarray}

Second, the Fokker-Planck equation relates the motion of a particle with the evolution of the density for its position. For a particle at $\ls{\pt}$ moving according to \eqref{eq:state_sde} and with density $\seqden{\pt}$ it states,
\begin{IEEEeqnarray}{rCl}
 \pdv{\seqden{\pt}}{\pt} & = & - \trace\left[ \pdv{}{\ls{\pt}}\left( \flowdrift{\pt}(\ls{\pt}) \seqden{\pt}(\ls{\pt}) \right) \right] + \sum_{ij} \mpdv{}{\ls{\pt,i}}{\ls{\pt,j}}\left( \flowcov{\pt,ij}(\ls{\pt}) \seqden{\pt}(\ls{\pt}) \right) \nonumber \\
 & = & - \seqden{\pt}(\ls{\pt}) \trace\left[ \pdv{\flowdrift{\pt}}{\ls{\pt}} \right] - \pdv{\seqden{\pt}}{\ls{\pt}}^T \flowdrift{\pt}(\ls{\pt}) + \trace\left[ \flowcov{\pt}(\ls{\pt}) \ppdv{\seqden{\pt}}{\ls{\pt}} \right] \nonumber \\
 & & \qquad\qquad +\: 2 \sum_{ij} \pdv{\flowcov{\pt,ij}}{\ls{\pt,i}} \pdv{\seqden{\pt}}{\ls{\pt,j}} + \seqden{\pt}(\ls{\pt}) \sum_{ij} \mpdv{\flowcov{\pt,ij}}{\ls{\pt,i}}{\ls{\pt,j}} \label{app-eq:fokker_planck}     ,
\end{IEEEeqnarray}
where $\flowcov{\pt}(\ls{}) = \half \flowdiffuse{\pt}(\ls{})\flowdiffuse{\pt}(\ls{})^T$. This may be recast to use log-densities using the following identities,
\begin{IEEEeqnarray}{rClCrCl}
 \pdv{\logseqden{\pt}}{\pt} & = & \frac{ 1 }{ \seqden{\pt}(\ls{}) } \pdv{\seqden{\pt}}{\pt} & \qquad & \pdv{\logseqden{\pt}}{\ls{}} & = & \frac{ 1 }{ \seqden{\pt}(\ls{}) } \pdv{\seqden{\pt}}{\ls{}} \nonumber
\end{IEEEeqnarray}
\begin{IEEEeqnarray}{rClCl}
 \npdv{2}{\logseqden{\pt}}{\ls{}} & = & \frac{ \seqden{\pt}(\ls{}) \npdv{2}{\seqden{\pt}}{\ls{}} - \pdv{\seqden{\pt}}{\ls{}} \pdv{\seqden{\pt}}{\ls{}}^T }{ \seqden{\pt}(\ls{})^2 } & = & \frac{ 1 }{ \seqden{\pt}(\ls{}) } \npdv{2}{\seqden{\pt}}{\ls{}} - \pdv{\logseqden{\pt}}{\ls{}}\pdv{\logseqden{\pt}}{\ls{}}^T \nonumber     .
\end{IEEEeqnarray}
Dividing \eqref{app-eq:fokker_planck} through by $\seqden{\pt}(\ls{\pt})$ (assuming that this is nowhere vanishing) we obtain,
\begin{IEEEeqnarray}{rCl}
 \pdv{\logseqden{\pt}}{\pt} & = & -\trace\left[ \pdv{\flowdrift{\pt}}{\ls{\pt}} \right] - \pdv{\logseqden{\pt}}{\ls{\pt}}^T \flowdrift{\pt}(\ls{\pt}) + \trace\left[ \flowcov{\pt}(\ls{\pt}) \npdv{2}{\logseqden{\pt}}{\ls{\pt}} \right] + \pdv{\logseqden{\pt}}{\ls{\pt}}^T \flowcov{\pt}(\ls{\pt}) \pdv{\logseqden{\pt}}{\ls{\pt}} \nonumber \\
 & & \qquad\qquad\qquad +\: 2 \sum_{ij} \pdv{\flowcov{\pt,ij}}{\ls{\pt,i}} \pdv{\logseqden{\pt}}{\ls{\pt,j}} + \sum_{ij} \mpdv{\flowcov{\pt,ij}}{\ls{\pt,i}}{\ls{\pt,j}} \label{app-eq:log_fp}       .
\end{IEEEeqnarray}
Combining the equations for the log-density \eqref{app-eq:sequence_logdensity} with the partial differential equation for the log-density evolution \eqref{app-eq:log_fp}, the governing equation for the optimal particle dynamics is reached. \qed

\section{Evolution of Ideal Importance Weights: Proof of theorem~\ref{theo:ideal_weight}} \label{app:ideal_weight}

Define $\logpartden{\pt}(\ls{\pt})=\log(\partden{\pt}(\ls{\pt}))$, and apply It\={o}'s lemma,
\begin{IEEEeqnarray}{rCl}
 d\logpartden{\pt} & = & \left[ \pdv{\logpartden{\pt}}{\pt} + \pdv{\logpartden{\pt}}{\ls{\pt}}^T\flowdrift{\pt}(\ls{\pt}) + \half \trace\left[ \flowdiffuse{\pt} \flowdiffuse{\pt}^T \npdv{2}{\logpartden{\pt}}{\ls{\pt}} \right] \right] d\pt + \pdv{\logpartden{\pt}}{\ls{\pt}}^T \flowdiffuse{\pt} d\flowbm{\pt}  \nonumber     .
\end{IEEEeqnarray}
Equivalently to \eqref{app-eq:log_fp}, the Fokker-Planck equation tells us that,
\begin{IEEEeqnarray}{rCl}
 \pdv{\logpartden{\pt}}{\pt} & = & -\trace\left[ \pdv{\flowdrift{\pt}}{\ls{\pt}} \right] - \pdv{\logpartden{\pt}}{\ls{\pt}}^T \flowdrift{\pt}(\ls{\pt}) + \trace\left[ \flowcov{\pt}(\ls{\pt}) \npdv{2}{\logpartden{\pt}}{\ls{\pt}} \right] + \pdv{\logpartden{\pt}}{\ls{\pt}}^T \flowcov{\pt}(\ls{\pt}) \pdv{\logpartden{\pt}}{\ls{\pt}} \nonumber \\
 & & \qquad\qquad\qquad +\: 2 \sum_{ij} \pdv{\flowcov{\pt,ij}}{\ls{\pt,i}} \pdv{\logpartden{\pt}}{\ls{\pt,j}} + \sum_{ij} \mpdv{\flowcov{\pt,ij}}{\ls{\pt,i}}{\ls{\pt,j}} \nonumber       .
\end{IEEEeqnarray}
Combining these,
\begin{IEEEeqnarray}{rCl}
 d\logpartden{\pt} & = & \Bigg[ -\trace\left[ \pdv{\flowdrift{\pt}}{\ls{\pt}} \right] + 2\: \trace\left[ \flowcov{\pt}(\ls{\pt}) \npdv{2}{\logpartden{\pt}}{\ls{\pt}} \right] + \pdv{\logpartden{\pt}}{\ls{\pt}}^T \flowcov{\pt}(\ls{\pt}) \pdv{\logpartden{\pt}}{\ls{\pt}} \nonumber \\
 & & \qquad +\: 2 \sum_{ij} \pdv{\flowcov{\pt,ij}}{\ls{\pt,i}} \pdv{\logpartden{\pt}}{\ls{\pt,j}} + \sum_{ij} \mpdv{\flowcov{\pt,ij}}{\ls{\pt,i}}{\ls{\pt,j}} \Bigg] d\pt + \pdv{\logpartden{\pt}}{\ls{\pt}}^T \flowdiffuse{\pt} d\flowbm{\pt}  \nonumber     .
\end{IEEEeqnarray}
Next, using It\={o}'e Lemma for the target sequence log-density, and inserting \eqref{app-eq:sequence_logdensity},
\begin{IEEEeqnarray}{rCl}
 d\logseqden{\pt} & = & \left[ \pdv{\logseqden{\pt}}{\pt} + \pdv{\logseqden{\pt}}{\ls{\pt}}^T\flowdrift{\pt}(\ls{\pt}) + \half \trace\left[ \flowdiffuse{\pt} \flowdiffuse{\pt}^T \npdv{2}{\logseqden{\pt}}{\ls{\pt}} \right] \right] d\pt + \pdv{\logseqden{\pt}}{\ls{\pt}}^T \flowdiffuse{\pt} d\flowbm{\pt}  \nonumber \\
 & = & \left[ \loglhood(\ls{\pt}) - \expectloglhood + \pdv{\logseqden{\pt}}{\ls{\pt}}^T\flowdrift{\pt}(\ls{\pt}) + \trace\left[ \flowcov{\pt} \npdv{2}{\logseqden{\pt}}{\ls{\pt}} \right] \right] d\pt + \pdv{\logseqden{\pt}}{\ls{\pt}}^T \flowdiffuse{\pt} d\flowbm{\pt}  \label{app-eq:logseqden_differential}      .
\end{IEEEeqnarray}
Finally, for the log-weight $\logpw{\pt}=\log(\pw{\pt})$, we have,
\begin{IEEEeqnarray}{rCl}
 d\logpw{\pt} & = & d\logseqden{\pt} - d\logpartden{\pt} \nonumber     ,
\end{IEEEeqnarray}
and substituting the two differentials the results is reached. \qed

\section{Evolution of Practical Importance Weights: Proof of theorem~\ref{theo:practical_weight}}\label{app:practical_weight}

It is well known that any diffusion process may be constructed as the limit of a particular discrete time Markov chain as the step size tends to $0$ \citep{Oksendal2003}. Specifically, for an It\={o} diffusion, if we have time instants at $\pt_{\ti}=\ti\dpt$, then,
\begin{IEEEeqnarray}{rCl}
 \ls{\ti} & = & \ls{\ti-1} + \flowdrift{\ti-1}(\ls{\ti-1}) \dpt + \flowdiffuse{\ti-1}(\ls{\ti-1}) \stdnorm{\ti} \dpt^{\half} \label{app-eq:discretised_sde}       ,
\end{IEEEeqnarray}
where $\{\stdnorm{\ti}\}$ are drawn independently from a standard Gaussian distribution with density $\bmden(\stdnorm{}) = \normalden{\stdnorm{}}{0}{\dpt I}$. We can derive an appropriate differential equation for a particle importance weight by constructing a sequential importance sampler on this system and then taking the limit $\dpt\to0$.

For a particle with density $\partden{\pt}$ to be properly weighted with respect to the target density $\seqden{\pt}$, the ideal importance weight is given in \eqref{eq:ideal_weight}. This is not practical because $\partden{\pt}$ is generally intractable. Instead we construct an extended target distribution over $\{\ls{\ti}, \stdnorm{1}, \stdnorm{2}, \dots, \stdnorm{\ti}\}$, in the manner of a sequential Monte Carlo (SMC) sampler \citep{DelMoral2006},
\begin{IEEEeqnarray}{rCl}
 \seqden{\ti}(\ls{\ti}) \prod_{k=1}^{\ti} \bmden(\stdnorm{k}) \label{app-eq:extended_target}     .
\end{IEEEeqnarray}
Samples are drawn by first simulating $\ls{0}$ from $\priorden$, the state prior, and $\stdnorm{k}$ for $k=1,\dots,\ti$ from $\bmden$, and then recursively applying \eqref{app-eq:discretised_sde}. We can write the inverse of this transformation using Taylor series expansions of $\flowdrift{\ti-1}$ and $\flowdiffuse{\ti-1}$,
\begin{IEEEeqnarray}{rCl}
  \ls{\ti-1} & = & \ls{\ti} - \flowdrift{\ti}(\ls{\ti}) \dpt - \flowdiffuse{\ti}(\ls{\ti}) \stdnorm{\ti} \dpt^{\half} + \pdv{\left[ \flowdiffuse{\ti}(\ls{\ti}) \stdnorm{\ti} \right]}{\ls{\ti}} \flowdiffuse{\ti}(\ls{\ti}) \stdnorm{\ti} \dpt + \bigo{\dpt^{\frac{3}{2}}} \nonumber     .
\end{IEEEeqnarray}
By the change of variables formula, and using the Jacobian of this inverse transformation, the proposal density in the extended space is,
\begin{IEEEeqnarray}{rCl}
 \priorden(\ls{0}) \prod_{k=1}^{\ti} \bmden(\stdnorm{k}) \determ{\pdv{\ls{k-1}}{\ls{k}}} \label{app-eq:extended_proposal}     .
\end{IEEEeqnarray}
The resulting importance weight is the ratio of target \eqref{app-eq:extended_target} and proposal \eqref{app-eq:extended_proposal} densities in the extended space. Taking the log,
\begin{IEEEeqnarray}{rCl}
 \logpw{\ti} & = & \logseqden{\ti}(\ls{\ti}) - \logprior(\ls{0}) - \sum_{k=1}^{\ti} \log\left(\determ{\pdv{\ls{k-1}}{\ls{k}}}\right) \nonumber \\
 & = & \logpw{\ti-1} + \logseqden{\ti}(\ls{\ti}) - \logseqden{\ti-1}(\ls{\ti-1}) - \log\left(\determ{\pdv{\ls{\ti-1}}{\ls{\ti}}}\right) \label{app-eq:discrete_log_weight}      .
\end{IEEEeqnarray}
To calculate the Jacobian term, we use the following identities,
\begin{IEEEeqnarray}{rCl}
 \determ{I+\delta A} & = & 1 + \delta \: \trace\left[A\right] + \half \delta^2 \left( \trace\left[A\right]^2 - \trace\left[A^2\right] \right) + \bigo{\delta^3} \nonumber \\
 \log(1 + \delta a) & = & \delta a - \frac{\delta^2}{2} a^2  + \bigo{\delta^3} \nonumber      ,
\end{IEEEeqnarray}
with which we reach,
\begin{IEEEeqnarray}{rCl}
 -\log\left(\determ{\pdv{\ls{\ti-1}}{\ls{\ti}}}\right) & = & \trace\left[ \pdv{\flowdrift{\ti-1}}{\ls{\ti-1}} \right] \dpt + \sum_{ij} \pdv{\flowdiffuse{\ti-1,ij}}{\ls{\ti-1,i}} \dbm{\ti,j} - \half \sum_{ijk} \left[ \pdv{\flowdiffuse{\ti-1,ik}}{\ls{\ti-1,j}} \pdv{\flowdiffuse{\ti-1,jk}}{\ls{\ti-1,i}} \right] \dpt + \bigo{\dpt^{\frac{3}{2}}} \nonumber      .
\end{IEEEeqnarray}
Finally, letting $\dpt\to0$ and substituting \eqref{app-eq:logseqden_differential} into \eqref{app-eq:discrete_log_weight}, we obtain the result. \qed

\section{Integrated Gaussian Flow: Proof of Theorem~\ref{theo:integrated_gaussian_flow}} \label{app:integrated_gaussian_flow}

For a small increment of pseudo-time, such that $\pt_0 = \pt$ and $\pt_1 = \pt+\dpt$,
\begin{IEEEeqnarray}{rCl}
 \ls{\pt+\dpt} & = & \lsmn{\pt+\dpt} + \exp\left\{-\half\dsf\dpt\right\} \left(\lsvr{\pt+\dpt}\lsvr{\pt}^{-1}\right)^{\half}(\ls{\pt}-\lsmn{\pt}) + \left[ \frac{1-\exp\left\{-\dsf\dpt\right\}}{\dpt} \right]^{\half} \lsvr{\pt+\dpt}^{\half} \left(\flowbm{\pt_1}-\flowbm{\pt_0}\right) \nonumber     .
\end{IEEEeqnarray}
Now use the following expansions for small increments,
\begin{IEEEeqnarray}{rCl}
 \lsmn{\pt+\dpt} &=& \lsmn{\pt} + \pdv{\lsmn{\pt}}{\pt}\dpt + \bigo{\dpt^2} \nonumber \\
 \lsvr{\pt+\dpt}^{\half} &=& \lsvr{\pt}^{\half} + \pdv{\lsvr{\pt}^{\half}}{\pt}\dpt + \bigo{\dpt^2} \nonumber \\
 \lsvr{\pt+\dpt} &=& \left(\lsvr{\pt}^{-1} + \dpt\lgmom^T\lgmov^{-1}\lgmom + \bigo{\dpt^2}\right)^{-1} \nonumber \\
                 &=& \lsvr{\pt} - \dpt\lsvr{\pt}\lgmom^T\lgmov^{-1}\lgmom\lsvr{\pt} + \bigo{\dpt^2} \nonumber \\
 \left(\lsvr{\pt+\dpt}\lsvr{\pt}^{-1}\right)^{\half} &=& \left( I - \dpt\lsvr{\pt}\lgmom^T\lgmov^{-1}\lgmom + \bigo{\dpt^2} \right)^{\half} \nonumber \\
                                      &=& I - \half\dpt\lsvr{\pt}\lgmom^T\lgmov^{-1}\lgmom + \bigo{\dpt^2} \nonumber \\
 \exp\left\{-\half\dsf\dpt\right\} &=& 1-\half\dsf\dpt + \bigo{\dpt^2}\nonumber \\
 \left[ \frac{1-\exp\left\{-\dsf\dpt\right\}}{\dpt} \right]^{\half} &=& \left[ \frac{\dsf\dpt + \bigo{\dpt^2}}{\dpt} \right]^{\half} = \dsf^{\half} + \bigo{\dpt} \nonumber      ,
\end{IEEEeqnarray}
and noting that,
\begin{IEEEeqnarray}{rCl}
 \pdv{\lsmn{\pt}}{\pt} & = & \lsvr{\pt} \lgmom^T\lgmov^{-1}\left(\ob{}-\lgmom\lsmn{\pt}\right) \nonumber      ,
\end{IEEEeqnarray}
leads to,
\begin{IEEEeqnarray}{rCl}
 \ls{\pt+\dpt} - \ls{\pt} & = & \lsvr{\pt}\lgmom^T\lgmov^{-1} \left[ \left(\ob{}-\lgmom\lsmn{\pt}\right) - \half\lgmom(\ls{\pt}-\lsmn{\pt}) - \half\dsf(\ls{\pt}-\lsmn{\pt}) \right]\dpt + \dsf^{\half}\lsvr{\pt}^{\half} \left(\flowbm{\pt+\dpt}-\flowbm{\pt}\right) + \bigo{\dpt^{\frac{3}{2}}} \nonumber     .
\end{IEEEeqnarray}
Taking the limit as $\dpt\to0$, the result follows. \qed

\section{Weight Numerical Integration: Proof of Theorem~\ref{theo:weight_numerical_integration}} \label{app:weight_numerical_integration}

Using Taylor expansions,
\begin{IEEEeqnarray}{rCl}
 \log\left(\determ{\pdv{\ls{\pt_1}}{\ls{\pt_0}}}\right) & = & - \log\left(\determ{\pdv{\ls{\pt_0}}{\ls{\pt_1}}}\right) + \bigo{(\pt_1-\pt_0)^{\frac{3}{2}}} \nonumber     .
\end{IEEEeqnarray}
Hence, the log-weight update for a small step is described by \eqref{app-eq:discrete_log_weight} with an error of order $\bigo{(\pt_1-\pt_0)^{\frac{3}{2}}}$. Therefore, by construction, as the step size goes to $0$, the weight evolves according to \eqref{eq:practical_weight_differential_equation}, and from theorem~\ref{theo:practical_weight}, the particle is properly weighted. \qed

\singlespacing
\bibliographystyle{apalike}
\bibliography{/users/pete/Dropbox/PhD/Cleanbib}

\end{document}